% This is LLNCS.DEM the demonstration file of
% the LaTeX macro package from Springer-Verlag
% for Lecture Notes in Computer Science,
% version 2.4 for LaTeX2e as of 16. April 2010
%
\documentclass{llncs}[11pt]
\usepackage[left=1.75in, right=1.75in]{geometry}
\usepackage{amsmath}
\usepackage{amssymb}

\begin{document}
\frontmatter          % for the preliminaries
\pagestyle{headings}  % switches on printing of running heads

\mainmatter   
%\usepackage{makeidx}  % allows for indexgeneration
%\usepackage[ruled,vlined]{algorithm2e}
%
     % for the preliminaries
            % start of the contributions
%
\title{Parameterized and Approximation Algorithms for Boxicity}
%

%                                     also used for the TOC unless
%                                     \toctitle is used
%
\titlerunning{Approximation Algorithms for Boxicity}  % abbreviated title (for running head)
\author{Abhijin Adiga \and Jasine Babu \and L. Sunil Chandran}
\authorrunning{Adiga et.al} % abbreviated author list (for running head)
\institute{Department of Computer Science and Automation,\\Indian Institute of Science, Bangalore 560012, India.\\
\email{abhijin@gmail.com, \{jasine, sunil\}@csa.iisc.ernet.in}}
\maketitle  
 % abbreviated author list (for running head)
%
%%%% list of authors for the TOC (use if author list has to be modified)
% \tocauthor{Ivar Ekeland, Roger Temam, Jeffrey Dean, David Grove,
% Craig Chambers, Kim B. Bruce, and Elisa Bertino}
%
%\institute{Department of Computer Science and Automation, Indian Institute of Science, Bangalore 560012, India.\\
% \email{abhijin@csa.iisc.ernet.in \and   jasine@csa.iisc.ernet.in  \and  sunil@csa.iisc.ernet.in}}
 
% \texttt{http://users/\homedir iekeland/web/welcome.html}
% \and
% Universit\'{e} de Paris-Sud,
% Laboratoire d'Analyse Num\'{e}rique, B\^{a}timent 425,\\
% F-91405 Orsay Cedex, France

%\maketitle              % typeset the title of the contribution

\begin{abstract}
Boxicity of a graph $G(V,$ $E)$, denoted by $box(G)$, is the minimum  integer $k$ such that $G$ can be represented as the intersection graph of axis parallel boxes in $\mathbb{R}^k$. The problem of computing boxicity is inapproximable even for graph classes like bipartite, co-bipartite and split graphs within $O(n^{1 - \epsilon})$-factor, for any $\epsilon >0$ in polynomial time unless $NP=ZPP$. We give FPT approximation algorithms for computing the boxicity of graphs, where the parameter used is the vertex or edge edit distance of the given graph from families of graphs of bounded boxicity. This can be seen as a generalization of the parameterizations discussed in \cite{Adiga2}. 

Extending the same idea in one of our algorithms, we also get an $O\left(\frac{n\sqrt{\log \log n}}{\sqrt{\log n}}\right)$ factor approximation algorithm for computing boxicity and an $O\left(\frac{n {(\log \log n)}^{\frac{3}{2}}}{\sqrt{\log n}}\right)$ factor approximation algorithm for computing the cubicity.  These seem to be the first $o(n)$ factor approximation algorithms known for both boxicity and cubicity. As a consequence of this result, 
a $o(n)$ factor approximation algorithm for computing the partial order dimension of finite posets and a $o(n)$ factor approximation algorithm for computing the threshold dimension of split graphs would follow.
\end{abstract} 
\keywords{Boxicity, Approximation algorithm, Parameterized Algorithm}
 \section{Introduction}
Let $G(V$, $E)$ be a graph. If  $I_1$, $I_2$, $\cdots$, $I_k$ are (unit) interval graphs on the vertex set $V$ such that $E(G)=E(I_1) \cap E(I_2) \cap \cdots \cap E(I_k)$, then $\{I_1$, $I_2$, $\cdots$, $I_k\}$ is called a box (cube) representation of $G$ of dimension $k$. Boxicity (cubicity) of an incomplete graph $G$, $box(G)$ (respectively $cub(G)$), is defined as the minimum integer $k$ such that $G$ has a box (cube) representation of dimension $k$. For a complete graph, it is defined to be zero. Equivalently, boxicity (cubicity) is the minimum integer $k$ such that $G$ can be represented as the intersection graph of axis parallel boxes (cubes) in $\mathbb{R}^k$. Boxicity was introduced by Roberts \cite{Rob1} in 1969 for modeling problems in social sciences and ecology. Box representations of low dimension are memory efficient for representing dense graphs. If a graph $G$ on $n$ vertices has a box representation of dimension $k$, it can be represented using $O(nk)$ space, whereas an adjacency list 
representation will need $O(m)$ space which is $O(n^2)$ for dense graphs. Some well known NP-hard problems like the max-clique problem are polynomial time solvable, if low dimensional box representations are known \cite{Rosgen}.

Boxicity is combinatorially well studied and its bounds in terms of parameters like maximum degree \cite{AdigaCOCOON,Esperet09} and tree-width \cite{Chandran2007} are known. For any graph $G$ on $n$ vertices, $box(G) \le \left \lfloor \frac{n}{2}\right \rfloor$ and $cub(G) \le \left \lfloor \frac{2n}{3}\right \rfloor$. It was shown by Scheinerman \cite{Sch1} in 1984 that the boxicity of outer planar graphs is at most two. In 1986, Thomassen \cite{Thom1} proved that the boxicity of planar graphs is at most 3. Boxicity is also closely related to other dimensional parameters of graphs like partial order dimension and threshold dimension \cite{AdigaCOCOON,Abh1,Yan1}.

However, from the computational point of view, boxicity is a notoriously hard problem. In 1981, Cozzens\cite{Coz1} proved that computing boxicity is NP-Hard. Later Yannakakis \cite{Yan1} and Kratochvil\cite{Krat1}, respectively, proved that determining whether boxicity of a graph is at most three and two are NP-Complete. Adiga et.al \cite{Abh1} proved that no polynomial time algorithm for approximating boxicity of bipartite graphs with approximation factor within 
$O(n^{0.5 - \epsilon})$ for any $\epsilon > 0$ is possible unless $NP=ZPP$. Recently, Chalermsook et al. \cite{Chalermsook2013} improved the hardness of approximation from $O(n^{0.5 - \epsilon})$ factor to $O(n^{1 - \epsilon})$ factor. Same non-approximability holds in the case of split graphs and co-bipartite graphs too. 

Since boxicity is even hard to approximate, one would like to look at parameterized\footnote[4]{For an introduction to parameterized complexity, please refer to \cite{Nie1}.} versions of the problem. The standard parameterization using boxicity as parameter is meaningless since deciding whether boxicity is at most $k$ is NP-Hard even for $k=2$. Parameterizations with vertex cover number (MVC) and minimum feedback vertex set size (FVS) as parameters were studied in \cite{Adiga2}. With vertex cover number as parameter $k$, they gave an algorithm to compute boxicity exactly, that runs in $2^{O(2^k k^2)}n$ time and another algorithm to get an additive one approximation for boxicity that runs in $2^{O(k^2 \log k )} n$ time, where $n$ is the number of vertices in the graph. Using FVS as parameter $k$, they gave a $2+\frac{2}{box(G)}$ factor approximation algorithm to compute boxicity that runs in $2^{O(2^k k^2)}n^{O(1)}$ time. 

The notion of edit distance refers in general to the smallest number of some well defined modifications to be applied to the input graph so that the resultant graph possesses some desired properties. Edit distance from graph classes is a well studied problem in parameterized complexity \cite{Cai96,Grohe2004,Marx2007,Vil1}. Note that, many well known classical problems such as interval completion problem \cite{Vil1}, planar vertex deletion problem \cite{Marx2007} etc. can be seen as special instances of edit distance problems. In \cite{Vil1}, it is shown that the problem of interval completion of graphs is in FPT, with interval completion number (the minimum number of edges to be added to a graph to convert into an interval graph) as the parameter. In \cite{Grohe2004,Marx2007} problems of editing graphs to make them planar is considered. 

In \cite{Cai03}, Cai introduced a framework for parameterizing problems with edit distance as parameter. For a family $\mathcal{F}$ of graphs, and $k\ge 0$ an integer, he used $\mathcal{F} + ke$, $\mathcal{F} - ke$ respectively to denote the families of graphs that can be obtained from a graph in $\mathcal{F}$ by adding and deleting at most $k$ edges, and $\mathcal{F} + kv$ to denote the family of graphs that can be converted to a graph in $\mathcal{F}$ by deleting at most $k$ vertices. A subset $S\subseteq V$ such that $|S|\le k$ is called a\textbf{ modulator} for an $\mathcal{F} + kv$ graph $G(V, E)$ if $G\setminus S \in \mathcal{F}$. Similarly, a set $E_k$ of pairs of vertices such that $|E_k|\le k$ is called a modulator for an $\mathcal{F} - ke$ graph $G(V, E)$ if $G'(V, E \cup E_k) \in \mathcal{F}$. In a similar way, modulators for graphs in $\mathcal{F} + ke$ and $\mathcal{F} + k_1e -k_2e$ also can be defined.

In \cite{Cai03}, Cai considered the parameterized complexity of vertex coloring problem on $\mathcal{F} -ke$, $\mathcal{F} + ke$ and $\mathcal{F} + kv$ for various families $\mathcal{F}$ of graphs, with $k$ as the parameter. This was further studied in \cite{Marx06}. In the same framework, we consider the parameterized complexity of computing boxicity of $\mathcal{F} + k_1e -k_2e$ and $\mathcal{F} + kv$ graphs for families $\mathcal{F}$ of bounded boxicity graphs, using $k_1+k_2$ and $k$ as parameters. We show that various other parameters relevant in the context of boxicity problem (such as interval completion number, MVC, FVS, crossing number etc.) can be seen as special instances of our edit distance parameters. (Note that the special cases of MVC and FVS were considered in \cite{Adiga2}. We provide an improved algorithm for the parameter FVS and we show that two of our parameters are more general than MVC.)
\subsection*{Our results}
%\textbf{Our results\\}
Our main results are the following three theorems and their corollaries described below.
% \newtheorem{thm}{Theorem}
% \newtheorem{lem}[thm]{Lemma}
% \newdefinition{rmk}{Remark}
% \newproof{pf}{Proof}
% \newproof{pot}{Proof of Theorem \ref{thm2}}
%\newtheorem{thm}{Theorem}
\begin{theorem}\label{thved}
Let $\mathcal{F}$ be a family of graphs such that $\forall G' \in \mathcal{F}$, $box(G') \le b$. Let $T(n)$ denote the time required to compute a $b$-dimensional box representation of a graph belonging to $\mathcal{F}$ on $n$ vertices. Let $G$ be an $\mathcal{F}+kv$ graph on $n$ vertices. Given a modulator of $G$, a box representation $\mathcal{B}$ of $G$, such that $|\mathcal{B}| \le box(G)\left(2+\frac{b}{box(G)}\right)$, can be computed in time $T(n-k)+n^2 2^{{O(k^2 \log k)}}$.
\end{theorem}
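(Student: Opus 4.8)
The plan is to build $\mathcal B$ as the concatenation of two box representations on the vertex set $V$: one of dimension at most $b$ that takes care of all non-edges lying inside $V\setminus S$, and one of dimension at most $2\,box(G)$ that takes care of all non-edges incident to the given modulator $S$ (where $|S|\le k$). Their concatenation then has dimension at most $2\,box(G)+b = box(G)\left(2+\tfrac{b}{box(G)}\right)$, as required.

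For the first part, note that $G\setminus S\in\mathcal F$, so a $b$-dimensional box representation of $G\setminus S$ can be computed in time $T(n-k)$; I would extend each of its $b$ interval graphs to all of $V$ by giving every vertex of $S$ the entire real line. This yields a box representation, of dimension at most $b$, of the graph $G_1$ obtained from $G\setminus S$ by adding $S$ as a set of universal vertices. Now let $G_2$ be obtained from $G$ by turning $V\setminus S$ into a clique. Then $E(G)=E(G_1)\cap E(G_2)$: inside $V\setminus S$, $G_1$ already records exactly the non-edges of $G$ and $G_2$ records none; on a pair meeting $S$, $G_2$ records exactly the non-edges of $G$ and $G_1$ records none; and every edge of $G$ survives in both $G_1$ and $G_2$. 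Hence concatenating the representation of $G_1$ with any box representation of $G_2$ gives one of $G$, and it remains to produce a box representation of $G_2$ of dimension at most $2\,box(G)$.

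The bound $box(G_2)\le 2\,box(G)$ I would get combinatorially, by surgery on an optimal box representation $\{B_1,\dots,B_d\}$ of $G$ with $d=box(G)$: split each coordinate $B_i$ into two interval graphs $B_i'$ and $B_i''$ in which the vertices of $S$ keep their intervals, while in $B_i'$ every $v\in V\setminus S$ is given the left ray ending at $v$'s original right endpoint in $B_i$, and in $B_i''$ the right ray starting at $v$'s original left endpoint in $B_i$. A short case check shows $\{B_1',B_1'',\dots,B_d',B_d''\}$ represents $G_2$: two vertices of $V\setminus S$ become adjacent in every coordinate; two vertices of $S$ keep their original joint behaviour; and a pair $\{s,v\}$ with $s\in S$, $v\in V\setminus S$ is simultaneously adjacent in $B_i'$ and $B_i''$ exactly when the two original intervals of $s$ and $v$ in $B_i$ intersected. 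This argument is not algorithmic, since we have no optimal representation of $G$ in hand; it only certifies the approximation factor.

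The real work, and the step I expect to be the main obstacle, is to actually compute an \emph{optimal} box representation of $G_2$ in time $n^2\,2^{O(k^2\log k)}$, without knowing $box(G)$. Here I would exploit the rigidity of $G_2$. In any box representation of $G_2$ the set $V\setminus S$ is already a clique in every coordinate (an intersection of interval graphs cannot create edges), so in each coordinate all intervals of $V\setminus S$ share a common point and the intervals of $S$ realise some interval order on $S$; moreover any two vertices of $V\setminus S$ with the same neighbourhood in $S$ are true twins and may be given identical intervals throughout; and $box(G_2)\le 2k$ (two interval graphs per vertex of $S$ suffice to cut it off from its non-neighbours), so the target dimension $d$ is $O(k)$. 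Thus for each $d=0,1,\dots,2k$ I would enumerate, for each coordinate, a realisation of an interval order on $S$ together with the slot of the common point of the $V\setminus S$-intervals --- $2^{O(k\log k)}$ choices per coordinate, hence $2^{O(k^2\log k)}$ for a $d$-tuple --- and test feasibility: the union over coordinates of the comparabilities within $S$ must equal $\overline{G}[S]$, and for each of the (at most $2^k$, but only \emph{tested}, never mutually enumerated) neighbourhood types $M\subseteq S$, the maximal legal set of $S$-vertices that can be cut off from a new element in a coordinate --- a set determined by the coordinate configuration and $M$ --- must have union over the coordinates exactly $M$. From a feasible $d$-tuple one reads off the representation and re-inflates the twin classes in $n^{O(1)}$ time. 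Taking the least feasible $d$ gives a representation of $G_2$ of dimension $box(G_2)\le 2\,box(G)$, and concatenation with the representation of $G_1$ yields $\mathcal B$ with $|\mathcal B|\le b+box(G_2)\le box(G)\left(2+\tfrac{b}{box(G)}\right)$ within the claimed running time $T(n-k)+n^2\,2^{O(k^2\log k)}$. The crux is exactly the claim that, because of the rigid shape of $G_2$, an optimal representation can be searched for by enumerating only $O(k)$-sized data, so that the $2^k$ twin types are merely verified and never enumerated against one another.
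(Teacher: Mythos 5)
Your proposal is correct and follows essentially the same route as the paper: the same decomposition of $G$ into the supergraph with $S$ made universal (handled via the $T(n-k)$ representation of $G\setminus S$) and the supergraph with $V\setminus S$ made a clique (with the same interval-splitting argument for the factor $2$), and the same enumeration of $2^{O(k\log k)}$ canonical per-coordinate configurations---endpoint orderings of $S$ plus the slot of the common clique point, with the $V\setminus S$ intervals determined greedily---which is exactly the paper's ``nice interval supergraph'' lemma. (Only a cosmetic slip: the union of the cut-off sets over the coordinates must equal $S\setminus M$, not $M$.)
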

\begin{theorem}\label{theed}
Let $\mathcal{F}$ be a family of graphs such that $\forall G' \in \mathcal{F}$, $box(G') \le b$. Let $T(n)$ denote the time required to compute a $b$-dimensional box representation of a graph belonging to $\mathcal{F}$ on $n$ vertices. Let $G$ be an $\mathcal{F}+k_1e-k_2e$ graph on $n$ vertices and let $k=k_1+k_2$. Given a modulator of $G$, a box representation $\mathcal{B}$ of $G$, such that $|\mathcal{B}| \le box(G) + 2b$, can be computed in time $T(n)+O(n^2)+ 2^{{O(k^2 \log k)}}$. 
\end{theorem}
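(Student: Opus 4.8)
The plan is to reconstruct, from $G$ and the given modulator, a graph $G^{\ast}\in\mathcal{F}$ that differs from $G$ only on pairs lying inside the set $V_k$ of (at most $2k$) endpoints of the modulator edges; compute a $b$‑dimensional box representation $\{B_1,\dots,B_b\}$ of $G^{\ast}$ in time $T(n)$; and then repair this representation on the $O(k^2)$ pairs inside $V_k$. The structural heart is a factorization $E(G)=E(G_1)\cap E(G_2)$, where $G_1$ is the graph on $V$ whose edges are the $G$‑edges inside $V_k$ together with \emph{all} pairs having an endpoint outside $V_k$, and $G_2$ is $G^{\ast}$ with all pairs inside $V_k$ made adjacent. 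Both identities follow by a two‑line case analysis on whether a pair is contained in $V_k$, using that $G$ and $G^{\ast}$ agree off $V_k$. Consequently $box(G)\le box(G_1)+box(G_2)$, and it remains to bound and construct representations of the two factors.

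For $G_1$: since $G_1$ restricted to $V_k$ is exactly $G[V_k]$ and every vertex of $V\setminus V_k$ is universal in $G_1$, any box representation of $G[V_k]$ extends to one of $G_1$ of the same dimension by assigning the all‑encompassing interval to each vertex of $V\setminus V_k$; hence $box(G_1)=box(G[V_k])\le box(G)$. We obtain an optimal representation of $G[V_k]$ by brute force: for $d=1,2,\dots$ enumerate all $d$‑tuples of interval graphs on the $\le 2k$ vertices of $V_k$ and keep the first tuple whose intersection is $G[V_k]$. Since $box(G[V_k])\le\lfloor |V_k|/2\rfloor\le k$ one never needs $d>k$, and the number of interval graphs on $m$ vertices is at most $(2m)!$, so the whole enumeration costs $2^{O(k^2\log k)}$.

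For $G_2$: I claim $box(G_2)\le 2b$, via a left/right extension trick. Fix an interval representation of each $B_i$; let $B_i^{L}$ extend the interval of every vertex of $V_k$ leftward to $-\infty$, and $B_i^{R}$ extend it rightward to $+\infty$. Each stays an interval graph, contains $B_i$, and turns $V_k$ into a clique; a short case analysis (for a pair $\{s,u\}$ with $s\in V_k$, $u\notin V_k$, both extensions meet $u$ iff the original intervals already met) gives $E(B_i^{L})\cap E(B_i^{R})=E(B_i)\cup\binom{V_k}{2}$. Distributivity then yields $\bigcap_i\bigl(E(B_i^{L})\cap E(B_i^{R})\bigr)=\bigl(\bigcap_i E(B_i)\bigr)\cup\binom{V_k}{2}=E(G^{\ast})\cup\binom{V_k}{2}=E(G_2)$, so $\{B_1^{L},B_1^{R},\dots,B_b^{L},B_b^{R}\}$ is a $2b$‑dimensional representation of $G_2$.

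The union of the representation of $G_1$ (dimension $\le box(G)$) and of $G_2$ (dimension $2b$) is the desired $\mathcal{B}$, of dimension at most $box(G)+2b$; the running time is $T(n)$ for the representation of $G^{\ast}$, $2^{O(k^2\log k)}$ for the brute force on $V_k$, and $O(n^2)$ for reconstructing $G^{\ast}$ and the routine interval manipulations and verification. The only genuinely delicate steps are establishing the factorization $E(G)=E(G_1)\cap E(G_2)$ and the identity $E(B_i^{L})\cap E(B_i^{R})=E(B_i)\cup\binom{V_k}{2}$; once these are in place the rest is bookkeeping, and I expect the former to be where care is most needed (keeping track of which pairs lie inside $V_k$ and of the direction of each edit when rebuilding $G^{\ast}$).
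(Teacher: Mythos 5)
Your proposal is correct and follows essentially the same route as the paper: your $G_1$ and $G_2$ are exactly the paper's auxiliary supergraphs $H_2$ (vertices outside the modulator endpoints made universal, handled by brute-force enumeration on the $\le 2k$ endpoints) and $H_1$ ($G^{\ast}$ with the endpoint set turned into a clique via the left/right doubling trick, which is the paper's Lemma~\ref{lm4}), and the factorization $E(G)=E(G_1)\cap E(G_2)$ is the paper's $G=H_1\cap H_2$. The dimension and running-time accounting also match.
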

 Slightly modifying the idea in the proof of Theorem \ref{thved}, we also prove the following:
\begin{theorem}\label{thgen}
 Let $G(V,E)$ be a graph on $n$ vertices. Then a box representation $\mathcal{B}$ of $G$, such that $|\mathcal{B}| \le t \cdot box(G)$, where $t$ is $O\left(\frac{n\sqrt{\log \log n}}{\sqrt{\log n}}\right)$, can be computed in polynomial time. Further, a cube representation $\mathcal{C}$ of $G$, such that $|\mathcal{C}| \le t' \cdot cub(G)$, where $t'$ is $O\left(\frac{n {(\log \log n)}^{\frac{3}{2}}}{\sqrt{\log n}}\right)$, can also be computed in polynomial time.
\end{theorem}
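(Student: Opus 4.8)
The plan is to exhibit $G$ as an intersection of a small number of ``clique completions'', each of which fits the hypotheses of Theorem~\ref{thved} with the trivial bounded‑boxicity family. Fix a parameter $k$ (chosen at the end), and partition $V(G)$ into $\lceil n/k\rceil$ blocks $S_1,\dots,S_{\lceil n/k\rceil}$ of size at most $k$. For each block let $G_i$ be the graph on $V(G)$ obtained from $G$ by adding every missing edge inside $V(G)\setminus S_i$; equivalently $\overline{G_i}$ is precisely the set of non-edges of $G$ incident with $S_i$. Since the blocks cover $V(G)$, every non-edge of $G$ is incident with some $S_i$ and hence is a non-edge of $G_i$, while every edge of $G$ survives in every $G_i$; so $E(G)=\bigcap_i E(G_i)$, and concatenating box representations $\mathcal{B}_i$ of the $G_i$ gives a box representation of $G$ with $box(G)\le\sum_i|\mathcal{B}_i|$. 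Thus it suffices to compute, for each $i$, a box representation $\mathcal{B}_i$ of $G_i$ with $|\mathcal{B}_i|=O(box(G))$ in polynomial time.

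For a fixed $i$, the set $S_i$ is a modulator for $G_i$ with respect to the family $\mathcal{F}$ of complete graphs, which has $b=0$ and for which a $b$-dimensional (i.e.\ empty) representation is produced in $O(n)$ time; hence Theorem~\ref{thved} applied to $G_i$ with this modulator returns, in time $O(n)+n^2 2^{O(k^2\log k)}$, a box representation $\mathcal{B}_i$ of $G_i$ with $|\mathcal{B}_i|\le box(G_i)\bigl(2+\tfrac{0}{box(G_i)}\bigr)=2\,box(G_i)$. It remains to bound $box(G_i)$ by $O(box(G))$, and this is the structural heart: take any optimal box representation $\{I_1,\dots,I_{box(G)}\}$ of $G$, and in the interval model of each $I_j$ stretch the interval of every vertex of $V(G)\setminus S_i$ leftwards to a common point, obtaining $I_j^{L}$, and rightwards to a common point, obtaining $I_j^{R}$. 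One checks that $I_j^{L}\cap I_j^{R}$ is exactly $I_j$ with $V(G)\setminus S_i$ turned into a clique, so $\bigcap_j\bigl(I_j^{L}\cap I_j^{R}\bigr)=G_i$ and therefore $box(G_i)\le 2\,box(G)$, giving $|\mathcal{B}_i|\le 4\,box(G)$.

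Summing over the blocks yields a box representation of $G$ of size at most $4\lceil n/k\rceil\,box(G)$, computed in time $\lceil n/k\rceil\bigl(O(n^2)+n^2 2^{O(k^2\log k)}\bigr)$. Choosing $k=\Theta\bigl(\sqrt{\log n/\log\log n}\bigr)$ makes $k^2\log k=O(\log n)$, so $2^{O(k^2\log k)}=n^{O(1)}$ and the whole procedure runs in polynomial time, while $4\lceil n/k\rceil=O\bigl(n\sqrt{\log\log n}/\sqrt{\log n}\bigr)$ is the claimed factor $t$. (If one also wants the output never to exceed the trivial representation by much, first test whether $box(G)\ge k$ is forced, and in that case output the classical $\lfloor n/2\rfloor$-dimensional representation, which is already within factor $t$.)

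For cubicity the template is identical --- partition $V(G)$, form the completions $G_i$, handle each by the cubicity analogue of Theorem~\ref{thved}, and concatenate --- but the doubling step breaks, since stretching \emph{unit} intervals to a common point destroys unit length. Instead one covers $\overline{G_i}$ directly by co--unit-interval graphs extracted from the bounded-size twin-core of $G_i$ (which has at most $2^{k}+k$ twin classes); carrying this out costs an extra $O(\log\log n)$ factor over the boxicity bound, and with the same $k$ this gives the stated $O\bigl(n(\log\log n)^{3/2}/\sqrt{\log n}\bigr)$ factor for cubicity. I expect the two delicate points to be (i) verifying that $I_j^{L}\cap I_j^{R}$ is exactly $I_j$ plus the clique on $V(G)\setminus S_i$, and that intersecting over $j$ reconstructs $G_i$ and nothing more, and (ii) making the cubicity detour precise so that the overhead is $\log\log n$ rather than, say, a power of the twin-core size.
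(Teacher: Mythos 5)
Your boxicity argument is essentially the paper's own proof: the same partition into $\lceil n/k\rceil$ blocks, the same supergraphs $G_i$ obtained by completing $V\setminus S_i$ into a clique, the same left-stretch/right-stretch argument (this is exactly Lemma~\ref{lm4} of the paper) giving $box(G_i)\le 2\,box(G)$, and the same choice $k=\Theta\bigl(\sqrt{\log n/\log\log n}\bigr)$ to make $2^{O(k^2\log k)}=n^{O(1)}$. The only cosmetic difference is that you route each $G_i$ through Theorem~\ref{thved} (losing another factor $2$), whereas the paper computes an optimal box representation of each $G_i$ directly via Theorem~\ref{th4}; either way $t=O\bigl(n\sqrt{\log\log n}/\sqrt{\log n}\bigr)$, so this half is fine.

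The cubicity half has a genuine gap, which you have partly flagged yourself. Your plan is to cover $\overline{G_i}$ by co-unit-interval graphs extracted from the twin-core of $G_i$, but nothing in the sketch bounds the resulting dimension by $O(\log\log n)\cdot cub(G)$ per block, nor is the construction polynomial time as stated: the twin-core has up to $2^k+k$ vertices, which for $k=\Theta\bigl(\sqrt{\log n/\log\log n}\bigr)$ is superpolylogarithmic, so brute-forcing an optimal cube representation of the core via Lemma~\ref{count} is not polynomial, and no other way of producing the unit-interval graphs is given. The missing ingredient is the constructive result of Adiga and Chandran that the paper relies on: from an optimal box representation of $G_i$ one can build, in polynomial time, a cube representation of $G_i$ of dimension $box(G_i)\lceil\log\alpha(G_i)\rceil$, where $\alpha(G_i)\le|V_i|\le k$ precisely because $V\setminus V_i$ is a clique. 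This gives $cub(G_i)\le 2\,box(G)\lceil\log k\rceil\le 2\,cub(G)\cdot O(\log\log n)$ (using $box(G)\le cub(G)$), and summing over the $t$ blocks yields the stated $t'$. Without this bound, or a worked-out replacement for your ``co-unit-interval covering,'' the $O(\log\log n)$ overhead is unsupported; note also that the doubling lemma has no unit-interval analogue, so there is no direct route from $cub(G)$ to $cub(G_i)$ that avoids passing through boxicity and the independence number.
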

We give proofs of these theorems in Sections \ref{fpt2}, \ref{fpt1} and \ref{approx}. The approximation factors obtained in Theorem \ref{thgen} is not very impressive. However, boxicity and cubicity problems are known to be inapproximable within $O(n^{1 - \epsilon})$-factor, for any $\epsilon >0$ unless $NP=ZPP$ and to our knowledge, no approximation algorithms for computing boxicity and cubicity of general graphs within $o(n)$ factor were known till now. We also show that as a consequence of the $o(n)$ factor approximation algorithm for boxicity, a $o(n)$ factor approximation algorithm for computing the partial order dimension of finite posets and a $o(n)$ factor approximation algorithm for computing the threshold dimension of split graphs would follow. These problems are also known to be hard to approximate within $O(n^{1 - \epsilon})$-factor, for any $\epsilon >0$ unless $NP=ZPP$ \cite{Chalermsook2013}.\\\\
%  The corresponding decision problems can easily be shown to be NP-complete using direct reductions from the classical boxicity problem. We just need to take $G'$ as an empty graph on $n$ nodes, $ved(G, G')$ as $MVC(G)$ and $eed(G, G')$ as $|E|$. It may be noted that $ved(G, G') \le eed(G, G')$ and therefore, an FPT algorithm with parameter $ved(G, G')$ is also an FPT algorithm with $eed(G, G')$ as parameter. But we give a more specialized algorithm for the parameter $eed(G, G')$, which gives us an improved result.\\
\textbf{Remarks:} 
\begin{itemize}
    \item[(1)] Though in Theorem \ref{thved} and Theorem \ref{theed} we assumed that a modulator of $G$ for $\mathcal{F}$ is given, in several important special cases, the modulator for $\mathcal{F}$ can be computed from $G$ in FPT time. This will be clear from the corollaries that follow. 
    \item[(2)] For many graph classes $\mathcal{F}$ which we consider here (like interval graphs or planar graphs), we can compute a low dimensional box representation of graphs belonging to $\mathcal{F}$ in polynomial time. Note that, in such cases, if the edit distance $k$ is below $\frac{\sqrt{\log n}}{\sqrt{\log \log n}}$, the algorithms mentioned in Theorem \ref{thved} and Theorem \ref{theed} run in time polynomial in $n$.
   \item[(3)] Note that, if $G \in \mathcal{F}+k_1e-k_2e$ with $k=k_1+k_2$, then $G \in \mathcal{F}+kv$. Therefore, the algorithm of Theorem \ref{thved} is applicable for graphs in $\mathcal{F}+k_1e-k_2e$ as well. But we give a more specialized algorithm for $\mathcal{F}+k_1e-k_2e$ graphs, which gives us an improved result as mentioned in Theorem \ref{theed}.
  \end{itemize}
\textbf{Corollaries:}
Our parameterized algorithms for boxicity generalizes parameterizations using many other useful parameters, as explained below. Some of these parameterizations are new and the others are improvements / generalizations of the results in \cite{Adiga2}.
\begin{itemize}
\item[(1)] 
\textbf{Computing boxicity with interval completion number as the parameter:} If the interval completion number of a graph $G(V, E)$ is at most $k$, we can use FPT algorithm for interval completion \cite{Vil1} to compute $E_k$ such that $|E_k| \le k$ and $G'(V, E \cup E_k)$ is an interval graph. Thus, with modulator $E_k$, $G \in \mathcal{F}-ke$, where $\mathcal{F}$ is the class of interval graphs. Since a box representation of one dimension can be computed in polynomial time for any interval graph, combining with our algorithm of Theorem \ref{theed}, we get an FPT algorithm that achieves an additive 2 factor approximation for $box(G)$, with interval completion number $k$ as parameter running in time $2^{O(k^2 \log k )}n^{O(1)}$. 
\item[(2)] 
\textbf{Computing boxicity with FVS as the parameter:} If $FVS(G) \le k$, using existing FPT algorithms \cite{cao10}, we can compute a minimum feedback vertex set $S$ of $G(V, E)$ such that $G'=G(V\setminus S)$ is a forest. Thus, with modulator $S$, $G \in \mathcal{F}+kv$, where $\mathcal{F}$ is the family of graphs which are forests. Since a box representation of dimension two can be computed in polynomial time for any forest, using our  algorithm of Theorem \ref{thved}, we get a $2+\frac{2}{box(G)}$ factor approximation for boxicity with FVS as parameter $k$, running in time $2^{O(k^2 \log k )}n^{O(1)}$. Note that, for the boxicity problem parameterized by FVS, the algorithm in \cite{Adiga2} gave the same approximation factor with its running time $2^{O(2^k k^2)}n^{O(1)}$. Our algorithm improves the running time in \cite{Adiga2}.
\item[(3)]
\textbf{Computing boxicity with proper interval vertex deletion number (PIVD) as the parameter:} The minimum number of vertices to be deleted from $G(V, E)$, so that the resultant graph is a proper interval graph, is called the proper interval vertex deletion number of $G$. If $PIVD(G)$ is at most $k$, we can use the FPT algorithm running in $O(6^k  kn^6)$ time for proper interval vertex deletion \cite{vil2} to compute a $S\subseteq V$ with $|S|\le k$ such that $G \setminus S$ is a proper interval graph. Thus, with modulator $S$, $G \in \mathcal{F}+kv$, where $\mathcal{F}$ is the family of all proper interval graphs. Since a box representation of one dimension can be computed in polynomial time for any proper interval graph, using our  algorithm of Theorem \ref{thved}, we get a $2+\frac{1}{box(G)}$ factor approximation for boxicity with PIVD as parameter $k$, running in time $2^{O(k^2 \log k )}n^{O(1)}$. We show that the parameter PIVD is more general than the MVC parameter considered in \cite{Adiga2}. 
\item[(4)]
\textbf{Computing boxicity with MVC as the parameter:}
$MVC(G)$ can be seen as the minimum number of vertices to be deleted from $G$ so that the resultant graph $G' \in \mathcal{F}'$ where $\mathcal{F}'$ is the family of graphs without any edges. Since $\mathcal{F}' \subseteq \mathcal{F}$, the family of proper interval graphs, it is easy to see that $PIVD(G)  \le MVC(G)$. Therefore, for computing boxicity with MVC as parameter, we can use the algorithm above for computing boxicity with PIVD as parameter. This gives us an algorithm for boxicity with MVC as parameter $k$, that achieves a $2+\frac{1}{box(G)}$ factor approximation with the same running time as that of the (better) additive one factor approximation algorithm for boxicity with parameter MVC, described in \cite{Adiga2}. However, as explained, the parameter PIVD above is more general and could be much smaller than MVC.  
\item[(5)] 
\textbf{Computing boxicity with planar vertex deletion number as the parameter:} The minimum integer $k$ such that $G(V, E)$ is a Planar$+kv$ graph, is called the planar vertex deletion number of $G$. If $G \in$ Planar$+kv$, we can use the FPT algorithm for planar deletion \cite{Marx2007} to compute a $S\subseteq V$ with $|S|\le k$ such that $G \setminus S$ is planar. Thus, with modulator $S$, $G \in \mathcal{F}+kv$, where $\mathcal{F}$ is the family of planar graphs. Since planar graphs have 3 dimensional box representations computable in polynomial time \cite{Thom1}, using our  algorithm of Theorem \ref{thved}, we get an FPT algorithm for boxicity, giving a $2+\frac{3}{box(G)}$ factor approximation for boxicity of graphs that can be made planar by deleting at most $k$ vertices, using planar vertex deletion number as parameter. It may be noted that this parameter is also smaller than MVC.
\item[(6)] 
\textbf{Computing boxicity with crossing number as the parameter:} If crossing number of a graph is at most $k$, we can combine the FPT algorithm for crossing number \cite{Grohe2004} to compute $E_k \subseteq E$ such that $|E_k| \le k$ and $G'(V, E \setminus E_k)$ is a planar graph. Thus, with modulator $E_k$, $G \in \mathcal{F}+ke$, where $\mathcal{F}$ is the class of planar graphs. Since planar graphs have 3 dimensional box representations computable in polynomial time, using our algorithm of Theorem \ref{theed}, we get an FPT algorithm that gives an additive 6-factor approximation for $box(G)$ with crossing number as parameter.
\item[(7)]
\textbf{Computing boxicity with planar edge deletion number as parameter:} Planar edge deletion number of a graph $G(V, E)$ is the minimum number of edges to be deleted from $G$ so that the resultant graph is planar. In \cite{Grohe2004}, an FPT algorithm for computing planar edge deletion number is also described. Using the same ideas as in the case of crossing number, we get an FPT algorithm that gives an additive 6-factor approximation for $box(G)$ with planar edge deletion number as parameter. Since planar edge deletion number$(G)$ $\le$ crossing number$(G)$, this parameter is more general than crossing number.
\item[(8)]
\textbf{Computing boxicity with proper interval edge deletion number (PIED) as the parameter:} The minimum number of edges to be deleted from $G(V, E)$, so that the resultant graph is a proper interval graph, is called the proper interval edge deletion number of $G$. If $PIED(G)$ is at most $k$, we can use the FPT algorithm running in $O(9^k n^{O(1)})$ time for proper interval edge deletion \cite{vil2} to compute a $E_k \subseteq E$ with $|E_k|\le k$ such that $G'(V, E \setminus E_k)$ is a proper interval graph. Thus, with modulator $S$, $G \in \mathcal{F}+ke$, where $\mathcal{F}$ is the family of all proper interval graphs. Since a box representation of one dimension can be computed in polynomial time for any interval graph, combining with our algorithm of Theorem \ref{theed}, we get an FPT algorithm that achieves an additive 2 factor approximation for $box(G)$, with PIED as parameter $k$, running in time $2^{O(k^2 \log k )}n^{O(1)}$. 
\end{itemize}
\textbf{FPT algorithm for cubicity: }
Computing cubicity is also hard to approximate \cite{Chalermsook2013} within $O(n^{1 - \epsilon})$ factor for any $\epsilon > 0$ unless $NP=ZPP$. It is natural to ask, like in the case of boxicity, whether FPT algorithms are possible for cubicity as well, with various edit distance parameters. Unfortunately, our algorithms of Theorems \ref{thved} and \ref{theed} heavily depend on the fact that intervals can be of different lengths. Since for cube representations all intervals are required to be of unit length, there is no direct way to extend our algorithms for cubicity. This, we leave as an open problem. However, in the special case of parameter $MVC(G)$, which is a relatively simple edit distance parameter, in Section \ref{fpt3}, we give a 2-factor approximation algorithm which runs in time $2^{O(2^k k^2)} n^{O(1)}$, where $k=MVC(G)$. This algorithm can be modified to get an approximation factor $(1+\epsilon)$, for any $\epsilon >0$, by allowing a larger running time of $2^{O\left(\frac{k^3 2^{\frac{4k}{\epsilon}}}{
\epsilon}\right)} n^{O(1)}$. 
\section{Prerequisites} \label{prereq} 
In this section, we give some basic facts necessary for the later part of the paper. 
\begin{lemma}[Roberts \cite{Rob1}] \label{lmrob}
Let $G(V,$ $E)$ be any graph. For any $x\in V$, $box (G) \le 1 + box (G \setminus \{x\})$.
\end{lemma}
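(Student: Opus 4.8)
The plan is to take an optimal box representation of $G \setminus \{x\}$ and extend it to a box representation of $G$ by adding a single new interval coordinate that correctly handles all edges incident to $x$. Let $d = box(G \setminus \{x\})$ and suppose $\{I_1, I_2, \ldots, I_d\}$ is a box representation of $G \setminus \{x\}$, so that $E(G \setminus \{x\}) = E(I_1) \cap \cdots \cap E(I_d)$. Each $I_j$ is an interval graph on vertex set $V \setminus \{x\}$; fix an interval realization assigning to each vertex $v$ an interval $f_j(v) \subseteq \mathbb{R}$.

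First I would build $d$ interval graphs $I_1', \ldots, I_d'$ on the full vertex set $V$ by extending each $f_j$: keep the interval $f_j(v)$ for every $v \neq x$, and assign to $x$ an interval large enough to intersect all of them (for instance, the interval spanning the union of all $f_j(v)$). This makes $x$ adjacent to every other vertex in each $I_j'$, so the only ``wrong'' adjacencies in $\bigcap_{j=1}^d E(I_j')$ compared to $E(G)$ are the pairs $\{x, u\}$ where $u$ is a non-neighbour of $x$ in $G$; all other adjacencies are already correct because restricting to $V \setminus \{x\}$ recovers the original representation and the neighbours of $x$ are made adjacent to $x$ in every coordinate.

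Next I would add one more interval graph $I_{d+1}$ to ``cut off'' exactly those spurious edges at $x$. Take $I_{d+1}$ to be the interval graph in which the non-neighbours of $x$ in $G$ are placed far to the left, $x$ is placed in the middle, and the neighbours of $x$ in $G$ together with $x$ are placed so that $x$'s interval overlaps all of them but none of the non-neighbours' intervals; concretely, one can make all vertices of $V \setminus \{x\}$ pairwise adjacent in $I_{d+1}$ (giving each a long interval) and only separate $x$'s interval from those of its non-neighbours. Since $I_{d+1}$ restricted to $V \setminus \{x\}$ is a complete graph, intersecting $E(I_{d+1})$ with $\bigcap_{j=1}^d E(I_j')$ does not remove any edge among $V \setminus \{x\}$, and it removes precisely the edges from $x$ to its non-neighbours. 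Hence $\{I_1', \ldots, I_d', I_{d+1}\}$ is a box representation of $G$ of dimension $d+1$, giving $box(G) \le d+1 = 1 + box(G \setminus \{x\})$.

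The only subtlety — and it is a mild one — is checking that each $I_{d+1}$ and each $I_j'$ really is an interval graph, i.e.\ that the interval assignments described are consistent; this is routine since we only ever enlarge intervals or place a single new interval, and interval graphs are closed under such operations. If $G$ is complete, the statement is trivial since $box(G) = 0 \le 1 + box(G \setminus \{x\})$, so no real obstacle arises there either.
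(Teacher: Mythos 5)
Your proof is correct: extending the optimal representation of $G \setminus \{x\}$ by giving $x$ a universal interval in each existing factor, and then appending one extra interval graph in which $V\setminus\{x\}$ is a clique and $x$ meets exactly $N_G(x)$ (concretely, neighbours of $x$ get $[0,2]$, non-neighbours get $[1,2]$, and $x$ gets $[-1,0]$, which settles the one ``subtlety'' you flag), is the standard argument for Roberts' lemma. The paper itself states this lemma only with a citation to Roberts and provides no proof, so there is nothing to compare against beyond noting that your construction is the classical one and is sound.
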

The proof of the following lemmas are easy.
\begin{lemma}\label{lmeasy}
Let $G(V,E)$ be a graph. Let $S\subseteq V$ be such that $\forall v \in V \setminus S$ and $u \in V$, $(u,v) \in E$. If a box representation $\mathcal{B}_S$ of $G[S]$ is known, then, in $O(n^2)$ time we can construct a box representation $\mathcal{B}$ of $G$ of dimension $|\mathcal{B}_S|$. In particular, $box(G) = box(G[S])$.
\end{lemma}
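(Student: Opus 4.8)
The plan is to extend the given representation $\mathcal{B}_S$ of $G[S]$ by making every vertex of $V \setminus S$ a universal vertex in each of its coordinates. Write $\mathcal{B}_S = \{I_1, \dots, I_d\}$ with $d = |\mathcal{B}_S|$, where each $I_j$ is an interval graph on $S$ realised by intervals $f_j(u)$, $u \in S$. For each $j$ I would pick an interval $J_j$ containing $\bigcup_{u \in S} f_j(u)$, and then define an interval graph $I_j'$ on the whole vertex set $V$ by keeping $f_j(u)$ for $u \in S$ and assigning the interval $J_j$ to every $v \in V \setminus S$. Since adding universal vertices to an interval graph leaves it an interval graph, each $I_j'$ is indeed an interval graph, and clearly $I_j'[S] = I_j$.

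Next I would check that $\mathcal{B} := \{I_1', \dots, I_d'\}$ represents $G$. For a vertex $v \in V \setminus S$ and any other vertex $w \in V$, the interval $J_j$ assigned to $v$ meets the interval assigned to $w$ in every coordinate $j$, so $vw \in E\big(\bigcap_j I_j'\big)$; by the hypothesis on $S$ we also have $vw \in E(G)$. For two vertices $u, w \in S$, the pair $uw$ belongs to $E\big(\bigcap_j I_j'\big)$ if and only if it belongs to $E\big(\bigcap_j I_j\big) = E(G[S])$, and since $G[S]$ is the subgraph of $G$ induced on $S$, this coincides with the adjacency of $u$ and $w$ in $G$. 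Hence $E\big(\bigcap_j I_j'\big) = E(G)$, so $\mathcal{B}$ is a box representation of $G$ of dimension exactly $d = |\mathcal{B}_S|$.

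For the running time, each of the $d$ coordinates is handled by one scan over the endpoints in $S$ to obtain $J_j$ followed by assigning $J_j$ to the vertices of $V \setminus S$, i.e.\ $O(n)$ time per coordinate and $O(nd) = O(n^2)$ in total (we may assume $d \le n$). Finally, the equality $box(G) = box(G[S])$ follows by applying the above construction to an \emph{optimal} representation of $G[S]$, which yields $box(G) \le box(G[S])$, together with the reverse inequality $box(G[S]) \le box(G)$, which holds simply because restricting any box representation of $G$ to the vertex set $S$ gives a box representation of $G[S]$. There is no real obstacle in the argument; the only point that needs a word of justification is that a vertex required to be adjacent to everything can always be placed in an interval graph as a long enough interval, so that the extension $I_j \mapsto I_j'$ stays inside the class of interval graphs.
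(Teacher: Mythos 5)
Your proposal is correct and follows essentially the same construction as the paper: the paper assigns to every vertex of $V\setminus S$ the interval $[l_i,r_i]$ spanning all intervals of $S$ in the $i$-th coordinate, which is exactly your choice of $J_j$. The adjacency check, the $O(n^2)$ time bound, and the derivation of $box(G)=box(G[S])$ from the induced-subgraph inequality all match the paper's argument.
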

Please see Appendix for the proof.
\begin{lemma}\label{lm4}
  Let $G(V, E)$ be a graph and let $A \subseteq V$. Let $G_1(V, E_1)$ be a super graph of $G$ with $E_1 = E \cup \{(x,y)|x,y \in A\}$. If a box representation $\mathcal{B}$ of $G$ is known, then in $O(n^2)$ time we can construct a box representation $\mathcal{B}$$_1$of $G_1$ of dimension $2 \cdot |\mathcal{B}|$. In particular, $box(G_1) \le 2.box(G)$.\end{lemma}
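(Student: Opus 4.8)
The plan is to exploit the freedom, available for box representations but not cube representations, that the intervals realizing an interval graph may have arbitrary lengths; concretely I will use a ``two-sided extension'' doubling trick. Write the given representation as $\mathcal{B}=\{I_1,\dots,I_k\}$, so each $I_j$ is an interval graph on $V$ with $E(G)=\bigcap_{j=1}^{k}E(I_j)$ and $k=|\mathcal{B}|$. Fix an interval realization of each $I_j$ assigning to a vertex $v$ the interval $[l_j(v),r_j(v)]$, and let $L_j$ (resp.\ $R_j$) be a number strictly smaller (resp.\ strictly larger) than every endpoint occurring in $I_j$; both exist since only finitely many endpoints are involved, and both are computable in $O(n)$ time. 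Put $K_A:=\{(x,y)\mid x,y\in A,\ x\neq y\}$, so that $E_1=E\cup K_A$.

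For each $j$ I would build two interval graphs from $I_j$. In $I_j'$, every vertex $v\in A$ is reassigned the interval $[l_j(v),R_j]$ (its original interval extended to the right past everything), while every $v\notin A$ keeps $[l_j(v),r_j(v)]$; in $I_j''$, every $v\in A$ is reassigned $[L_j,r_j(v)]$ (extended to the left), while every $v\notin A$ is unchanged. The crux is the claim $E(I_j')\cap E(I_j'')=E(I_j)\cup K_A$, which I would verify by a three-case check. (i) Any two vertices of $A$ share the point $R_j$ in $I_j'$ and the point $L_j$ in $I_j''$, so $A$ induces a clique in both, contributing exactly $K_A$. (ii) For $x,y\notin A$ nothing changed in either graph, so such a pair is an edge of $I_j'\cap I_j''$ iff it is an edge of $I_j$. (iii) For $x\in A$, $y\notin A$, since $J_y:=[l_j(y),r_j(y)]$ lies strictly between $L_j$ and $R_j$, we have $[l_j(x),R_j]\cap J_y\neq\emptyset$ iff $l_j(x)\le r_j(y)$, and $[L_j,r_j(x)]\cap J_y\neq\emptyset$ iff $l_j(y)\le r_j(x)$; both conditions hold simultaneously iff $[l_j(x),r_j(x)]\cap J_y\neq\emptyset$, i.e.\ iff $xy\in E(I_j)$.

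I would then take $\mathcal{B}_1=\{I_j',I_j''\mid 1\le j\le k\}$, which has dimension $2k=2|\mathcal{B}|$, and compute the intersection of all its edge sets. Using the claim together with the elementary set identity $\bigcap_j (X_j\cup Y)=\bigl(\bigcap_j X_j\bigr)\cup Y$ (valid since $Y=K_A$ is the same for all $j$), this intersection equals $\bigl(\bigcap_j E(I_j)\bigr)\cup K_A=E(G)\cup K_A=E_1$, so $\mathcal{B}_1$ is a box representation of $G_1$; taking $\mathcal{B}$ to be an optimal box representation of $G$ gives $box(G_1)\le 2\cdot box(G)$. Each $I_j'$ and $I_j''$ is produced from $I_j$ by modifying $O(n)$ intervals, and there are $2k\le n$ of them, so the whole construction runs in $O(n^2)$ time.

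I do not expect a genuine obstacle here; the only point that needs care is case (iii), where one must observe that $(A_1\cap J)\neq\emptyset$ and $(A_2\cap J)\neq\emptyset$ together force $(A_1\cap A_2\cap J)\neq\emptyset$ — false for arbitrary sets, but true here because the two extended intervals of $x$ overlap in precisely its original interval and all the sets involved are subintervals of a line. It is also worth flagging explicitly that this argument relies on intervals of unequal length, which is exactly why it does not transfer to the cubicity setting.
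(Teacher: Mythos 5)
Your proof is correct and is essentially identical to the paper's: the paper likewise forms, for each interval graph $I_i$ in $\mathcal{B}$, two copies $I_{i_1}$ and $I_{i_2}$ in which the intervals of the vertices of $A$ are extended leftward to a common point in one copy and rightward to a common point in the other, and verifies exactly your three cases. The only differences are cosmetic (the paper anchors the extensions at the minimum left endpoint and maximum right endpoint rather than at points strictly beyond all endpoints, and it argues directly about non-edges rather than via the identity $E(I_j')\cap E(I_j'')=E(I_j)\cup K_A$), so there is nothing to change.
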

Please see Appendix for the proof.
\begin{lemma}\label{count}
 Let $G(V, E)$ be a graph on $n$ vertices of boxicity (cubicity) $b$. Then an optimum box (cube) representation of $G$ can be computed in $2^{O(nb \log n)}$ time.
\end{lemma}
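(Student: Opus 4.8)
The plan is to reduce the computation to a brute-force search over $b$-tuples of interval graphs (resp. unit interval graphs), exploiting the fact that there are only $2^{O(n\log n)}$ such graphs on a fixed labelled vertex set and that each of them admits an explicit representation of polynomially bounded size which the search itself can generate. First I would record the standard normalization for interval graphs: every interval graph on $n$ vertices has an interval representation in which the $2n$ interval endpoints are pairwise distinct, and replacing these endpoints by $1,2,\dots,2n$ in their left-to-right order leaves the graph unchanged. Consequently every interval graph on the vertex set $V$ is obtained from one of at most $(2n)! = 2^{O(n\log n)}$ assignments of the values $\{1,\dots,2n\}$ to the endpoint symbols $\{l_v,r_v : v\in V\}$ (subject to $l_v<r_v$ for each $v$), and all such assignments, together with the interval graph each determines, can be listed in $2^{O(n\log n)}$ total time.

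For the cubicity statement I would invoke the analogous well-known fact that every unit interval graph on $n$ vertices has a unit interval representation in which all intervals share a common length and all $2n$ endpoints are integers lying in a range whose size is polynomial in $n$. Such a representation can be read off from the left-to-right order of the vertices together with the non-decreasing ``reach'' function $i\mapsto\rho(i)=\max\{\,j:\ \text{the $i$-th and $j$-th intervals intersect}\,\}$, which between them determine all adjacencies; since there are at most $n!$ orders and at most $2^{2n}$ such functions, this again yields an explicitly enumerable family of $2^{O(n\log n)}$ representations realizing every unit interval graph on $V$.

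Given these families, the algorithm is straightforward: for $k=1,2,3,\dots$, enumerate all $k$-tuples $(I_1,\dots,I_k)$ from the appropriate family above; for each tuple test, in $n^{O(1)}$ time, whether $E(G)=E(I_1)\cap\cdots\cap E(I_k)$; and output the first tuple that passes, together with the stored interval (resp. unit interval) representations of its coordinates. Since $box(G)=b$ (resp. $cub(G)=b$), no $k$-tuple can pass for $k<b$, so the loop reaches $k=b$, at which point a $b$-dimensional representation is produced; hence the output is an optimum box (cube) representation. The number of $k$-tuples is at most $\bigl(2^{O(n\log n)}\bigr)^{k}=2^{O(nk\log n)}$, each processed in $n^{O(1)}$ time, so the $k$-th iteration costs $2^{O(nk\log n)}$ and the overall cost is $\sum_{k=1}^{b}2^{O(nk\log n)}=2^{O(nb\log n)}$, as required.

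I expect the only genuine obstacle to be the two normalization lemmas, and in particular pinning down an explicit polynomial-size family of unit interval representations that realizes every unit interval graph on $n$ labelled vertices; the enumeration, the intersection test, the termination argument, and the time accounting are all routine.
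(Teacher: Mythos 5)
Your proposal is correct and follows essentially the same route as the paper: enumerate the $2^{O(n\log n)}$ orderings of the $2n$ interval endpoints to list all candidate interval supergraphs, then brute-force over $d$-tuples for $d=1,2,\dots,b$ and validate the intersection, giving $2^{O(nb\log n)}$ total time. The only (immaterial) divergence is in the unit-interval case, where the paper simply recognizes unit interval graphs among the enumerated interval graphs and regenerates a unit representation in linear time, rather than enumerating them directly via a vertex order and reach function as you do.
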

Please see Appendix for the proof.
\\\\
For a vertex $v \in V$ of a graph $G$, we use $N_G(v)$ to denote the set of neighbors of $v$ in $G$. Let $I$ be an interval representation of an interval graph $G(V, E)$. We use $l_v(I)$ and $r_v(I)$ respectively to denote the left and right end points of the interval corresponding to $v\in V$ in $I$. The interval is denoted as $\bigl[l_v(I), r_v(I)\bigr]$. Without loss of generality, we can assume that all the $2|V|$ interval end points are distinct points in $\mathbb{R}$. Unless specified otherwise, we make this as a default assumption. If $S \subseteq V$ induces a clique in $G$, then, it is easy to see that the intersection of all the intervals in $I$ corresponding to vertices of $S$ is nonempty. This property is referred to as \textit{Helly property of intervals} and we refer to this common region of intervals as the Helly region of the clique $S$.
%\newdefinition{rmk}{Definition}
\begin{definition}
 Let $G(V, E)$ be a graph in which $S \subseteq V$ induces a clique in $G$. Let $H$ be an interval supergraph of $G$. Let $p$ be a point on the Real line. If $H$ has an interval representation $I$ satisfying the following conditions:
\begin{itemize}
 \item[(1)] $p$ belongs to the Helly region of $S$ in $I$.
 \item[(2)] For each $v \in S$,\\ $l_v(I)=\min \left( p,\displaystyle\min_{u \in N_G(v) \cap (V \setminus S)} {r_u(I)} \right)$ and\\$r_v(I)=\max \left( p,\displaystyle \max_{u \in N_G(v)\cap (V \setminus S)} {l_u(I)}\right)$
\end{itemize}
then we call $I$ a nice interval representation of $H$ with respect to $S$ and $p$. If $H$ has a nice interval representation with respect to clique $S$ and some point $p$, then $H$ is called  a nice interval supergraph of $G$ with respect to clique $S$. 
\end{definition}
\begin{lemma}\label{thmnice}
 Let $G(V, E)$ be a graph. If $A \subseteq V$ with $|A| \le k$ and $G[V \setminus A]$ a clique on $V \setminus A$, then 
 \begin{itemize}
  \item[(a)] There are at most $2^{O(k \log k)}$ nice interval supergraphs of $G$ with respect to clique $V \setminus A$. These can be enumerated in $n2^{O(k \log k)}$ time.
  \item[(b)] If $G$ has a box representation $\mathcal{B}$ of dimension $b$, then it has a box representation $\mathcal{B'}$ of the same dimension, in which $\forall I \in \mathcal{B'}$, $I$ is a nice interval supergraph of $G$ with respect to clique $V \setminus A$.
\end{itemize}
\end{lemma}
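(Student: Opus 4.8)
The plan is to prove both parts by pinning down exactly what data determines a nice interval representation. Throughout write $S = V \setminus A$, so $|A|\le k$ and $S$ induces a clique of $G$.

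\emph{Part (a).} The key observation is that once we fix an interval representation of the $\le k$ vertices of $A$ and fix the point $p$, condition (2) of the definition forces the interval of every $v\in S$ to be exactly $\bigl[\min(p,\min_{u\in N_G(v)\cap A}r_u),\ \max(p,\max_{u\in N_G(v)\cap A}l_u)\bigr]$, with the convention that an empty $\min$ is $+\infty$ and an empty $\max$ is $-\infty$ (so a vertex of $S$ with no neighbour in $A$ becomes the point $\{p\}$). Hence a nice interval representation is determined up to order-isomorphism — and therefore the nice interval supergraph $H$ is determined — by the relative order on the line of the at most $2|A|+1\le 2k+1$ points consisting of the endpoints of the $A$-intervals together with $p$. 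The number of such orders is $2^{O(k\log k)}$, which bounds the number of nice interval supergraphs. To enumerate them I would run over all these orders; for each, realise the $A$-intervals and $p$, compute every $S$-interval from the formula above in $O(nk)$ time, form the resulting interval graph $H$, and output $H$ iff $G\subseteq H$. Since all $S$-intervals contain $p$ and the formula makes $v$'s interval meet the interval of each $u\in N_G(v)\cap A$, every edge of $G$ not lying inside $A$ is automatically present in $H$, so the test $G\subseteq H$ reduces to checking the $O(k^2)$ pairs within $A$; the total running time is $n\,2^{O(k\log k)}$.

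\emph{Part (b).} Given a box representation $\mathcal B=\{I_1,\dots,I_b\}$ of $G$, I would massage each $I_i$ into a nice interval supergraph independently. Since $S$ is a clique of $G\subseteq I_i$, the $S$-intervals of $I_i$ pairwise intersect, so by the Helly property they share a common point; let $p_i$ be such a point. Define $I'_i$ by keeping every $A$-interval exactly as in $I_i$ and resetting each $v\in S$ to the interval $\bigl[\min\!\bigl(p_i,\min_{u\in N_G(v)\cap A}r_u(I_i)\bigr),\ \max\!\bigl(p_i,\max_{u\in N_G(v)\cap A}l_u(I_i)\bigr)\bigr]$ (an infinitesimal perturbation restores distinct endpoints if one insists on them). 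By construction $I'_i$ is a nice interval representation with respect to $S$ and $p_i$; it is a supergraph of $G$ because edges inside $A$ persist (those intervals are untouched and $I_i\supseteq G$), edges inside $S$ persist (every new $S$-interval contains $p_i$), and an edge $uv\in E(G)$ with $u\in A$, $v\in S$ persists since the formula gives $l_v(I'_i)\le r_u(I_i)$ and $r_v(I'_i)\ge l_u(I_i)$.

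The remaining — and principal — task is to show $\mathcal B'=\{I'_1,\dots,I'_b\}$ still represents $G$, i.e.\ that no non-edge of $G$ becomes an edge in all $I'_i$. Every non-edge of $G$ has an endpoint in $A$. If both endpoints are in $A$, some $I_i$ separated them (since $\mathcal B$ represented $G$), and $I'_i$ has the same $A$-intervals, so it still does. The crux is a non-edge $xy$ with $x\in A$, $y\in S$: choose $I_i$ in which the intervals of $x$ and $y$ are disjoint, say $r_x(I_i)<l_y(I_i)$. Then $p_i\ge l_y(I_i)>r_x(I_i)$ because $p_i$ lies in $y$'s $I_i$-interval, and for every $u\in N_G(y)\cap A$ the interval of $u$ meets that of $y$ in $I_i$ so $r_u(I_i)\ge l_y(I_i)>r_x(I_i)$; taking the minimum of these with $p_i$ gives $l_y(I'_i)>r_x(I_i)$, so $x$ and $y$ remain disjoint in $I'_i$. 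The subcase $r_y(I_i)<l_x(I_i)$ is the mirror image, using right endpoints. This last step is where the definition does all its work: it is precisely the fact that $p_i$ sits inside every $S$-interval of $I_i$, and that every $A$-neighbour of $y$ already reaches $y$'s interval in $I_i$, that prevents the reshaped $S$-intervals from reconnecting any separated $A$-to-$S$ pair. I expect this case to be the main obstacle; the rest is bookkeeping.
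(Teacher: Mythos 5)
Your proposal is correct and follows essentially the same route as the paper: the same counting of orderings of the $2|A|$ endpoints together with $p$ for part (a), and the same reassignment of the clique vertices' intervals relative to a Helly point $p_i$ for part (b), where your case analysis for non-edges amounts to the paper's observation that each new interval is contained in the corresponding old one.
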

%\newproof{pf}{Proof}
\begin{proof}
  \begin{itemize}
  \item[(a)] Let $H$ be any nice interval super graph of $G$ with respect to $V \setminus A$ and $I$ be a nice interval representation of $H$ with respect to $V \setminus A$ and a point $p$. Let $S$ be the set of end points (both left and right) of the intervals corresponding to vertices of $A$ in $H$. Clearly $|S|=2|A|\le 2k$. The order of end points of vertices of $A$ in $I$ from left to right corresponds to a permutation of elements of $S$ and therefore, there are at most $2k!$ possibilities for this ordering. Moreover, note that the points of $S$ divide the Real line into $|S|+1$ regions and that $p$ can belong to any of these regions. From the definition of nice interval representation, it is clear that, once the point $p$ and the end points of vertices of $A$ are fixed, the end points of vertices in $V\setminus A$ get automatically decided. 

Thus, to enumerate every nice interval supergraph $H$ of $G$ with respect to clique $V \setminus A$, it is enough to enumerate all the $(2k)!=2^{O(k \log k)}$ permutations of elements of $S$ and consider $|S|+1 \le 2k+1$ possible placements of $p$ in each of them. Some of these orderings may not produce an interval super graph of $G$ though. In $O(k^2)$ time, we can check whether the resultant graph is an interval supergraph of $G$ and output the interval representation in $O(n)$ time. The number of supergraphs enumerated is only $(2k+1)2^{O(k \log k)}= 2^{O(k \log k)}$. 
\item[(b)] Let $\mathcal{B} = \{I_1$, $I_2$, $\cdots$, $I_b\}$ be a box representation of $G$ and for $1 \le i \le b$, let $p_i \in \mathbb{R}$ be a point belonging to the Helly region corresponding to $V \setminus A$ in $I_i$. For $1 \le i \le b$, let $I'_i$ be the interval graph defined by the interval assignment
$$
\left[ l_v(I'_i),r_v(I'_i)\right] = 
\begin{cases}  [l_v(I_i),r_v(I_i)] & \text{if $v\in A$,}
\\
[l'_v(i),r'_v(i)] &\text{if $v \in V\setminus A$.}
\end{cases}
$$
where $l'_v(i)=\min \left( p_i,\displaystyle\min_{u \in N_G(v) \cap A} {r_u(I_i)} \right)$ and\\$r'_v(i)=\max \left( p_i,\displaystyle \max_{u \in N_G(v)\cap A} {l_u(I_i)}\right)$\\\\ 
\begin{claim}\label{claimNice}
 $\mathcal{B'}= \{I'_1$, $I'_2$, $\cdots$, $I'_b\}$ is a box representation of $G$ such that $\forall I'_i \in \mathcal{B'}$, $I'_i$ is a nice interval supergraph of $G$ with respect to clique $V \setminus A$.
\end{claim}
\begin{proof}
 Consider any $I'_i \in \mathcal{B'}$. For $u, v \in A$, intervals corresponding to $u$ and $v$ are the same in both $I_i$ and $I'_i$.  If $(u, v) \in E(G)$, with $u, v \in A$, then the intervals corresponding to $u$ and $v$ intersect in $I'_i$ because they were intersecting in $I_i$. For any $(u, v) \in E(G)$, with $u \in A$ and $v \in V \setminus A$, the interval of $v$ intersects the interval of $u$ in $I'_i$, by the definition of $[l'_v(i),r'_v(i)]$. Vertices of $V \setminus A$ share the common point $p_i$. Thus, $I'_i$ is an interval supergraph of $G$. It is easy to see that $I'_i$ is a nice interval supergraph of $G$ with respect to clique $V \setminus A$ and point $p_i$.

Since $\mathcal{B}$ is a valid box representation of $G$, for each $(u, v) \notin E(G)$, $\exists I_i \in \mathcal{B}$ such that $(u, v) \notin E(I_i)$. Observe that for any vertex $v \in V$, interval of $v$ in $I_i$ contains the interval of $v$ in $I'_i$. Therefore, if $(u, v) \notin E(I_i)$, then $(u, v) \notin E(I'_i)$ too. Thus, $\mathcal{B'}$ is also a valid box representation of $G$. 
\end{proof}
\end{itemize}
\qed
\end{proof}
The following theorem is the key ingredient in our parameterized algorithm with vertex edit distance parameter and also our general approximation algorithm for boxicity.
\begin{theorem}\label{th4}
  Let $G(V, E)$ be a graph. If $A \subseteq V$ with $|A| \le k$ and $G[V \setminus A]$ a clique on $V \setminus A$, then
 \begin{itemize}
  \item[(a)] $box(G) \le k$. 
  \item[(b)] An optimal box representation of $G$ can be found in time $n^2 2^{{O(k^2 \log k)}}$, where $n=|V|$.
 \end{itemize}
\end{theorem}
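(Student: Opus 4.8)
The plan is to treat the two parts separately. Part (a) is immediate from Roberts' lemma: $G[V\setminus A]$ is a complete graph and hence, by the convention in the definition of boxicity, has boxicity $0$; deleting the (at most $k$) vertices of $A$ one at a time and applying Lemma~\ref{lmrob} at each step yields $box(G)\le |A|+box(G[V\setminus A])=|A|\le k$.

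For part (b), the idea is to reduce the computation of $box(G)$ to a set-cover problem of bounded size over the nice interval supergraphs supplied by Lemma~\ref{thmnice}. By Lemma~\ref{thmnice}(a), the family $\mathcal{H}=\{H_1,\dots,H_M\}$ of all nice interval supergraphs of $G$ with respect to the clique $V\setminus A$ has $M=2^{O(k\log k)}$ members and can be listed, together with interval representations, in $n2^{O(k\log k)}$ time. Since every $H_j$ is a supergraph of $G$, a subcollection $\{H_{j_1},\dots,H_{j_d}\}\subseteq\mathcal{H}$ is a box representation of $G$ if and only if, for every non-adjacent pair $\{u,v\}$ of $G$, there is some $H_{j_\ell}$ in which $u$ and $v$ are non-adjacent. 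Writing $\mathcal{U}$ for the set of non-edges of $G$ (as $V\setminus A$ is a clique, every non-edge has an endpoint in $A$, so $|\mathcal{U}|\le kn$) and, for each $H_j$, letting $\mathrm{cov}(H_j)\subseteq\mathcal{U}$ be the set of non-edges of $G$ that remain non-edges in $H_j$, a subcollection is a box representation of $G$ exactly when its $\mathrm{cov}(\cdot)$ sets together cover $\mathcal{U}$. Combining this with Lemma~\ref{thmnice}(b), $box(G)$ equals the minimum number of sets among $\mathrm{cov}(H_1),\dots,\mathrm{cov}(H_M)$ whose union is $\mathcal{U}$, and an optimal cover yields an optimal box representation.

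To compute this minimum cover, I would use part (a): the optimum is at most $k$, so it suffices to examine, for $d=0,1,\dots,k$ in increasing order, every $d$-element subcollection of $\mathcal{H}$ and test whether its $\mathrm{cov}$-sets cover $\mathcal{U}$; the first $d$ admitting a covering subcollection equals $box(G)$, and that subcollection, with the stored interval representations, is the desired optimal box representation. There are at most $\sum_{d\le k}\binom{M}{d}\le (k+1)M^k=2^{O(k^2\log k)}$ subcollections, and each covering test costs $O(|\mathcal{U}|\cdot k)=O(k^2 n)$ time; together with the $n2^{O(k\log k)}$ preprocessing of Lemma~\ref{thmnice}(a) this gives total time within the claimed $n^2 2^{O(k^2\log k)}$ bound.

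The only genuine subtlety — the step I would be most careful about — is the equivalence that restricting attention to nice interval supergraphs loses nothing, i.e. that the minimum cover over $\mathcal{H}$ is exactly $box(G)$ and not merely an upper bound. One direction is clear (any covering subcollection is a box representation, hence has dimension at least $box(G)$); the other direction is precisely what Lemma~\ref{thmnice}(b), via Claim~\ref{claimNice}, provides: every box representation can be transformed, without increasing its dimension, into one consisting of nice interval supergraphs, so the unrestricted and restricted optima coincide. Everything else is routine enumeration and bookkeeping.
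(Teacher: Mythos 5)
Your proof is correct and follows essentially the same route as the paper: part (a) via iterated application of Lemma~\ref{lmrob}, and part (b) by enumerating the $2^{O(k\log k)}$ nice interval supergraphs from Lemma~\ref{thmnice}(a), trying all subcollections of size $d=0,1,\dots,k$ in increasing order, and using Lemma~\ref{thmnice}(b) to certify that restricting to nice supergraphs preserves optimality. Your set-cover phrasing of the validity check (each non-edge must be destroyed by some member of the subcollection) is just a more explicit rendering of the paper's ``check if this gives a valid box representation,'' and your time bound matches the claimed $n^2 2^{O(k^2\log k)}$.
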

\begin{proof}
 \begin{itemize}
  \item[(a)] It is easy to infer from Lemma \ref{lmrob} that $box(G)\le box ( G \setminus A)+ |A|$ $=$ $k$ since $G\setminus A$ is a clique. 
  \item[(b)] From part (b) of Lemma \ref{thmnice}, it is easy to see that if $box(G)=b\le k$, then there exists a box representation $\mathcal{B'}= \{I'_1$, $I'_2$, $\cdots$, $I'_b\}$ in which each $I'_i$ is a nice interval super graph of $G$ with respect to clique $V \setminus A$. We call such a representation a nice box representation of $G$. To construct a nice box representation of $G$ with respect to clique $V \setminus A$ of dimension $b$, we do the following: We choose $b$ of the $2^{O(k \log k)}$ supergraphs guaranteed by part (a) of Lemma \ref{thmnice} and check if this gives a valid box representation of $G$. All possible nice box representations of dimension $b$ can be computed and validated in $n^2 2^{{O(k.b \log k)}}$ time. We might have to repeat this process  for $1 \le b \le box(G)$ in that order, to obtain a minimum box representation. Hence the total time required to compute an optimum box representation of $G$ is $n^2 2^{{O(k^2 \log k)}}$.
\end{itemize}
\qed
\end{proof}
\section{FPT Algorithm for Computing the Boxicity of $\mathcal{F}+kv$ Graphs} \label{fpt2} 
In this section, we give a proof of Theorem \ref{thved}. %\begin{pf}
Let $G(V, E)$ be a $\mathcal{F}+kv$ graph with a modulator $S_k$ on $k$ vertices such that $G'= G \setminus S_k \in \mathcal{F}$. Let $H_1(V, E_1)$ be the graph obtained by defining $E_1=E \cup \{(x,y)|x,y \in V \setminus S_k\}$. Since $|S_k| \le k$, using Theorem \ref{th4}, we can get an optimal box representation of $H_1$ in $n^2 2^{{O(k^2 \log k)}}$ time. Let $\mathcal{B}_1 = \{I_1$, $I_2$, $\cdots$, $I_p\}$ be the resultant box representation of $H_1$. By Lemma \ref{lm4}, $p$ is at most $2\cdot box(G)$. 

Let $H_2(V, E_2)$ be the graph obtained by defining $E_2=E \cup \{(x,y)|x \in S_k$ and $y \in V\}$. Let $\mathcal{B'}$ $=\{J_1$, $J_2$, $\cdots$, $J_b\}$ be a box representation of $G'$ (computed in time $T(n-k)$). $\mathcal{B'}$ is a box representation of $H_2[V \setminus S_k]$, because $H_2[V \setminus S_k]=G'$. Since in $H_2$, vertices in $S$ are adjacent to every other vertex, by Lemma \ref{lmeasy}, $box(H_2)=box(H_2[V \setminus S_k]$ and a box representation $\mathcal{B}_{2}$ $=\{L_1$, $L_2$, $\cdots$, $L_b\}$ of $H_2$ can be produced in $O(n^2)$ time.

Since $G=H_1 \cap H_2$, $\mathcal{B}$ $=$ $\mathcal{B}_{1}$ $\cup$ $\mathcal{B}_{2}$ as a valid box representation of $G$, of dimension $box(G)\left(2+\frac{box(G')}{box(G)}\right)$. All computations were done in $T(n-k)+n^2 2^{{O(k^2 \log k)}}$ time.
%\end{pf}
\section{FPT Algorithm to Compute Boxicity of $\mathcal{F}+k_1e-k_2e$ Graphs} \label{fpt1}
In this section, we give a proof of Theorem \ref{theed}. Let $G(V, E)$ be a $\mathcal{F}+k_1e-k_2e$ graph on $n$ vertices, where $k_1+k_2=k$. Let $E_{k_1} \cup E_{k_2}$ be a modulator of $G$ such that $|E_{k_1}|=k_1$, $|E_{k_2}|=k_2$ and $G'(V, \left(E \cup E_{k_2}\right) \setminus E_{k_1}) \in \mathcal{F}$.
Let $S \subseteq V(G)$ be the set of end points of the edges in $E_{k_1}\cup E_{k_2}$. 
 Clearly, $|S| \le 2k$ and $box(G) \le k$. Using the construction in Lemma \ref{count}, an optimal box representation $\mathcal{B}_S = \{I_1$, $I_2$, $\cdots$, $I_p\}$ of $G[S]$ can be computed in $2^{O(k^2 \log k)}$ time. 

Let $b$ be the boxicity of $G'$ and let $\mathcal{B'}$ $=\{J_1$, $J_2$, $\cdots$, $J_b\}$ be an optimal box representation of $G'$ (computed in time $T(n)$). We will produce a near optimal box representation of $G$ using box representations $\mathcal{B}_S$ and $\mathcal{B'}$, thus giving a constructive proof. 

Let $H_1(V, E_1)$ be the graph obtained by setting $E_1=E' \cup \{(x,y)|x, y \in S\}$. From the box representation $\mathcal{B'}$ of $G'$, in $O(n^2)$ time we can construct (by Lemma \ref{lm4}) a box representation $\mathcal{B}_1$ $=\{J_{11}$, $J_{12}$, $J_{21}$, $J_{22}$, $\cdots$, $J_{b1}$, $J_{b2}\}$ of $H_1$ with dimension $2 \cdot box(G')$. 

Let $H_2(V, E_2)$ be the graph obtained by setting $E_1=E \cup \{(x,y)|x \in V \setminus S, y \in V \}$. Observe that $H_2(S)=G[S]$. By lemma \ref{lmeasy}, $box(H_2) \le box(G[S])$ and a box representation $\mathcal{B}_2$ of $H_2$ of dimension $box(G[S])$ can be computed from box representation $\mathcal{B}_S$ of $G[S]$ in $O(n^2)$ time.

We describe how to compute a box representation of $G$ from box representations $\mathcal{B}_S$ and $\mathcal{B'}$.
Let $H_1$ and $H_2$ be as defined above. We constructed box representation $\mathcal{B}_1$ of $H_1$ of dimension $2 \cdot box(G')$ from $\mathcal{B'}$ in polynomial time. Box representation $\mathcal{B}_S$ was obtained in $2^{{O(k^2 \log k)}}$ time. Box representation $\mathcal{B}_2$ of $H_2$ of dimension $box(G[S])$ was constructed using $\mathcal{B}_S$ in $O(n^2)$ time. It is easy to see that $G = H_1 \cap H_2$ and hence $\mathcal{B}_G$ $=$ $\mathcal{B}_1$ $\cup$ $\mathcal{B}_2$ is a valid box representation of $G$ of dimension at most $2 \cdot box(G')$ $+$ $box(G[S]) \le 2 \cdot box(G') + box(G)$, since $G[S]$ is an induced subgraph of $G$. 
\section{An Approximation Algorithm for Boxicity of Graphs} \label{approx}
In this section, we give a proof of Theorem \ref{thgen} and use it to derive sublinear approximation algorithms for some other dimensional parameters closely related to boxicity.

Let $G(V, E)$ be the given graph with $|V|=n$. Let $k = \frac{\sqrt{\log n}}{\sqrt{\log \log n}}$ and $t = \lceil \frac{n}{k} \rceil$. The algorithm proceeds by defining $t$ super graphs of $G$ and computing their box representations. Let the vertex set $V$ be partitioned arbitrarily into $t$ sets $V_1, V_2, \cdots, V_t$ where $|V_i| \le k$, for each $1 \le i \le t$. We define super graphs $G_1, G_2, \cdots, G_t$ of $G$ with $G_i(V, E_i)$ defined by setting  $E_i = E \cup \{(x,y)|x,y \in$ $V \setminus V_i\}$, for $1\le i\le t$.
\begin{lemma}\label{th2}
 Let $G_i$ be as defined above, for $1\le i\le t$. Optimal box representation of $G_i$ can be computed in polynomial time. 
\end{lemma}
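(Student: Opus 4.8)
The plan is to apply Theorem~\ref{th4} to each graph $G_i$. Recall that $G_i(V, E_i)$ is obtained from $G$ by making $V \setminus V_i$ into a clique, i.e. $E_i = E \cup \{(x,y) : x,y \in V \setminus V_i\}$. Hence $G_i[V \setminus V_i]$ is a clique on $V \setminus V_i$, and $V_i$ plays the role of the set $A$ in the hypothesis of Theorem~\ref{th4}. Since the partition was chosen so that $|V_i| \le k$, we may take $k$ itself as the bound on $|A|$. Thus Theorem~\ref{th4}(b) applies directly: an optimal box representation of $G_i$ can be computed in time $n^2 \, 2^{O(k^2 \log k)}$.

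It remains to check that, for the specific choice $k = \frac{\sqrt{\log n}}{\sqrt{\log \log n}}$, this running time is polynomial in $n$. First I would observe that $k^2 = \frac{\log n}{\log \log n}$ and $\log k = \tfrac12(\log \log n - \log \log \log n) \le \log \log n$, so
$$
k^2 \log k \;\le\; \frac{\log n}{\log \log n} \cdot \log \log n \;=\; \log n .
$$
Therefore $2^{O(k^2 \log k)} = 2^{O(\log n)} = n^{O(1)}$, and the total time $n^2 \, 2^{O(k^2 \log k)}$ is polynomial in $n$. This completes the argument.

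The only real obstacle here is bookkeeping: one must make sure the $O(\cdot)$ inside the exponent of Theorem~\ref{th4}(b) is absorbed correctly, i.e. that the hidden constant $c$ in $c\,k^2\log k$ still yields $n^{c\cdot O(1)} = n^{O(1)}$, which it does since $c$ is an absolute constant independent of $n$. No new combinatorial idea is needed beyond Theorem~\ref{th4}; the lemma is essentially a restatement of that theorem specialized to the graphs $G_i$ together with the verification that the chosen $k$ keeps the bound polynomial. I would also note in passing that each $G_i$ is indeed a supergraph of $G$ on the same vertex set, so $box(G_i) \le k$ by Theorem~\ref{th4}(a), a fact that will be used when these representations are combined in the proof of Theorem~\ref{thgen}.
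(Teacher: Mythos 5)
Your proposal is correct and follows essentially the same route as the paper: both invoke Theorem~\ref{th4}(b) with $A = V_i$ (noting $G_i[V \setminus V_i]$ is a clique and $|V_i| \le k$) and then check that $k = \frac{\sqrt{\log n}}{\sqrt{\log \log n}}$ makes $2^{O(k^2 \log k)}$ polynomial in $n$. Your explicit verification that $k^2 \log k \le \log n$ is just a more careful spelling-out of the step the paper compresses into ``$n^2 2^{O(k^2\log k)} = O(n^3)$''.
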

\begin{proof}
Noting that $G[V\setminus V_i]$ is a clique and $|V_i| \le k=\frac{\sqrt{\log n}}{\sqrt{\log \log n}}$, by Theorem \ref{th4}, we can compute an optimum box representation of $G_i$ in $n^2 2^{{O(k^2 \log k)}}$ $=$ $O(n^3)$ time, where $n=|V|$. 
\qed
\end{proof}
We can compute the optimal box representations of $G_i$, for $1 \le i \le t=  \left \lceil \frac{n\sqrt{\log \log n}}{\sqrt{\log n}} \right \rceil$ as explained in Lemma \ref{th2} in total $O(n^4)$ time. Observe that $E(G)=E(G_1) \cap E(G_2) \cap \cdots \cap E(G_t)$. Therefore, it is a trivial observation that the union of box representations of $G_i$s we computed gives us a valid box representation of $G$. By Lemma \ref{lm4} we have, $box(G_i) \le 2.box(G)$. Hence,
\begin{equation}
box(G) \le box(G_1)+box(G_2)+ \cdots + box(G_t)\le 2t. box(G)
\end{equation}
Substituting $t= \left \lceil \frac{n\sqrt{\log \log n}}{\sqrt{\log n}} \right \rceil$ in the equation above gives the approximation ratio as claimed in Theorem \ref{thgen}.

Using the optimal box representations of $G_i$, for $1 \le i \le t$, a cube representation $\mathcal{C}$ of $G$, such that $|\mathcal{C}| \le t' \cdot cub(G)$, where $t'$ is $O\left(\frac{n {(\log \log n)}^{\frac{3}{2}}}{\sqrt{\log n}}\right)$, can also be computed in polynomial time. We know \cite{Adiga10} that from an optimum box representation of $G_i$, in polynomial time, we can construct a cube representation of $G_i$ of dimension\\$box(G_i) \lceil{\log \alpha(G_i)}\rceil$, where $\alpha(G_i)$ is the independence number of $G_i$ which is at most $|V_i|$. The union of cube representations of $G_i$s we computed gives us a valid cube representation of $G$. Hence,
\begin{eqnarray}
cub(G) &\le& cub(G_1)+cub(G_2)+ \cdots + cub(G_t) \nonumber \\
   &\le& (box(G_1)+box(G_2)+ \cdots + box(G_t)) \lceil \log k \rceil \nonumber \\
&\le& 2t. box(G) O(\log \log n) \le O(t. \log \log n). cub (G) 
 \end{eqnarray}
Substituting $t= \left \lceil \frac{n\sqrt{\log \log n}}{\sqrt{\log n}} \right \rceil$ in the equation above gives the approximation ratio as claimed in Theorem \ref{thgen}.
\subsection{Approximating partial order dimension}
A partially ordered set (poset) $\mathcal{P} = (X, P)$ consists of a nonempty set $X$ and a binary relation $P$ on $X$ that is reflexive,
antisymmetric and transitive. If every pair of distinct elements of $X$ are comparable under the relation $P$, then $(X, P)$ is called a 
total order or a linear order. A linear extension of a partial order $(X, P)$ is a linear order $(X, P')$ such that
$\forall x, y \in X$, $(x, y) \in P \Rightarrow (x, y) \in P'$. The dimension of a poset $\mathcal{P} = (X, P)$, denoted by $dim(\mathcal{P})$ is defined as the 
smallest integer $k$ such that $\mathcal{P}$ can be expressed as the intersection of $k$ linear extensions $(X, P_1), (X, P_2), \ldots, (X, P_k)$ of $\mathcal{P}$: i.e., 
if $\forall x, y \in X$, $(x, y) \in P \Leftrightarrow (x, y) \in P_i$, for each $1 \le i \le k$.
This concept was introduced by Dushnik and Miller in 1941 \cite{Dushnik41}. 

A height-two poset is a poset $(X, P)$ in which all elements of $X$ are either minimal elements or maximal elements under the relation $P$.
Even in the case of height-two posets, partial order dimension is hard to approximate within an $O(n^{1-\epsilon})$ factor for any $\epsilon >0$, unless
$\text{NP}=\text{ZPP}$ \cite{Chalermsook2013}. 
\begin{corollary}
There is a polynomial time algorithm to approximate the partial order dimension of any poset $\mathcal{P} = (X, P)$ defined on a finite set $X$, within an 
$o(n)$ factor, where $n=|X|$.
\end{corollary}
\begin{proof}
Assume that $\mathcal{P} = (X, P)$ defined on a finite set $X$. 

We will first prove the statement for height-two posets.
Adiga et al. \cite{AdigaCOCOON} showed that if $\mathcal{P}$ is a height-two poset defined on a finite set $X$ and $G_P$ is the underlying comparability graph of $\mathcal{P}$ 
(i.e., $X$ is the vertex set of $G_P$ and two vertices are adjacent in $G_P$ if and only if they are comparable under $P$), then 
$\operatorname{box}(G_P) \le dim(\mathcal{P})\le 2 \operatorname{box}(G_P)$. 
Since $\operatorname{box}(G_P)$ can be approximated in polynomial time within an $o(n)$ factor by Theorem \ref{thgen}, 
a polynomial time $o(n)$ factor approximation algorithm for computing the poset dimension of height-two posets follows.

By a construction given by R. Kimble \cite{Trotter78}, given a poset $\mathcal{P} = (X, P)$ of arbitrary height, we can construct a height-two poset 
$\mathcal{P'} = (S(X), P')$ from $\mathcal{P} = (X, P)$ in polynomial time so that $dim(\mathcal{P}) \le dim(\mathcal{P}') \le 1+ dim(\mathcal{P})$. 
Combined with this reduction, the polynomial time $o(n)$ factor approximation algorithm we obtained in the previous paragraph for height-two posets gets 
extended for posets of arbitrary height. 
\end{proof}
\subsection{Approximating the threshold dimension of split graphs}
A graph $G(V, E)$ is called a threshold graph if there exists $s \in \mathbb{R}$ and a labeling of vertices $w: V \mapsto \mathbb{R}$ 
such that $\forall u, v \in V, (u, v) \in E \Leftrightarrow w(u) + w(v) \ge s$. The threshold dimension of $G$, denoted by $t(G)$
is the minimum integer $k$ such that there exists threshold graphs $G_1, G_2, \ldots, G_k$ on the same vertex set as $V(G)$ with 
$E(G) = E(G_1) \cup E(G_2) \cup \cdots \cup E(G_k)$. The concept of threshold graphs and threshold dimension was introduced by 
Chv\'{a}tal and Hammer \cite{Chvatal77} while studying some set-packing problems. 
Threshold dimension is also hard to approximate within an $O(n^{1-\epsilon})$ factor for any $\epsilon >0$, unless
$\text{NP}=\text{ZPP}$ \cite{Chalermsook2013}. The same hardness result holds for the restricted case of split graphs as well \cite{Abh1}. 
\begin{corollary}
There is a polynomial time algorithm to approximate the threshold dimension of any split graph $G$ within an $o(n)$ factor, 
where $n=|V(G)|$.
\end{corollary}
\begin{proof}
 Given any split graph $G$, Adiga et al. \cite{Abh1} gave a polynomial time method to construct another split graph $H$ on the same vertex set
such that $t(G)=\operatorname{box}(H)$. By Theorem \ref{thgen}, the result follows.
\end{proof}
\section{FPT Algorithm for Computing the Cubicity of Graphs with MVC as Parameter} \label{fpt3} 
 In this section, we give an algorithm to compute a cube representation of $G$ which is of dimension at most $2 \cdot cub(G)$, using $MVC(G)$ as parameter $k$, which runs in time $2^{O(2^k k^2)} n^{O(1)}$. In fact, by allowing a larger running time of $2^{O\left(\frac{k^3 2^{\frac{4k}{\epsilon}}}{\epsilon}\right)} n^{O(1)}$, we can achieve a $(1+\epsilon)$ approximation factor, for any $\epsilon >0$.

Let $G(V, E)$ be a graph on $n$ vertices. Without loss of generality, we can assume that $G$ is connected. We can compute \cite{Nie1} a minimum vertex cover of $G$ in time $2^{O(k)}n^{O(1)}$. Let $S \subseteq V$ be a vertex cover of $G$ of cardinality $k$ such that $G(V\setminus S)$ is an independent set on $n-k$ vertices. 
For each $A \subseteq S$, define $N_A =\{v \in (V \setminus S)$ : $N_G(v)=A \}$. Notice that $\{N_A : A \subseteq S$ and $N_A \ne \emptyset\}$ defines a partition of $V \setminus S$.  For each $N_A \ne \emptyset$, let $v_A$ be any arbitrary (representative) vertex in $N_A$. Let $V' = \{ v_A : A \subseteq S$ and $N_A \ne \emptyset \}$. It is easy to see that $|V'| \le 2^k -1$. 

Let $G'$ be the induced subgraph of $G$ on vertex set $V' \cup S$. It is known \cite{ChandranDS09} that $cub(G) \le MVC(G) + \left \lceil \log (|V(G)|-MVC(G))\right \rceil -1$. Since $MVC(G')=k$, we get $cub(G') \le 2k-1$. Using the construction in Lemma \ref{count}, we can compute an optimum cube representation of $G'$ in time $2^{O(2^k k^2)}$. Let $\mathcal{C'} = \{I'_1$, $I'_2$, $\cdots$, $I'_p\}$ be the resultant cube representation of $G'$. Let us construct $p$ unit interval graphs $I_i$, $1 \le i \le p$ as defined below.
$$
\left[ l_v(I_{i}),r_v(I_{i})\right] = 
\begin{cases}  \left[l_{v_A}(I'_i), r_{v_A}(I'_i)\right] & \text{if $v\in N_A$,}
\\
\left[ l_v(I'_{i}), r_v(I'_{i})\right] &\text{if $v \in S$.}
\end{cases}
$$
Since $\mathcal{C'}$ is a cube representation of $G'$, and $\forall v \in N_A$, $N_G(v) = N_G(v_A) = A$, it is easy to verify that $I_i$, $1 \le i \le p$ are unit interval super graphs of $G$.

Let $t= \displaystyle \max_{A \subseteq S} |N_A| $. For each $N_A \ne \emptyset$, let us consider the mapping
$n_A :  N_A  \mapsto \{1, 2, \cdots, |N_A|\}$, where $n_A(v)$ is the unique number representing $v \in N_A$. [Note that if $u \in N_A$ and $v \in N_{A'}$, where $A \ne A'$, then, $n_A(u)$ and $n_{A'}(v)$ could potentially be the same.] For  $1 \le i \le q = \left \lceil \log t \right \rceil$, define 
$b_i : V \setminus S \mapsto \{1, 2, \cdots, q\}$ as $b_i(v) = $ $i^{th}$ bit in the $q$ bit binary representation of $n_A(v)$, when $v \in N_A$.
We define $q$ unit interval graphs $J_1, J_2, \cdot, J_q$ as follows.  
$$\left[ l_v(J_{i}),r_v(J_{i})\right] = 
\begin{cases}  \left[1, 2 \right] & \text{if $v\in S$,}
\\
\left[ 0, 1\right] &\text{if $v \in (V \setminus S)$ and $b_i(v) = 0$,}
\\
\left[ 2, 3\right] &\text{if $v \in (V \setminus S)$ and $b_i(v) = 1$}
\end{cases}
$$
Since in each $J_i$, $1\le i\le q$, $S$ forms a clique in the region $\left[1, 2 \right]$ and intervals corresponding to every $v \in (V \setminus S)$ intersects with this interval, it is easy to see that these are interval super graphs of $G$. These unit interval graphs can be constructed in $O(n \log n)$ time.
\begin{claim}
 $\mathcal{C}=\{I_1, I_2, \cdots, I_p, J_1, J_2, \cdots, J_q \}$ is a valid cube representation of $G$, of dimension $p+q \le 2$ $cub(G)$ constructible in $2^{O(2^k k^2)} n^{O(1)}$ time.
\end{claim}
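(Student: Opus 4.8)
The plan is to check three things in turn: that $\mathcal{C}$ is a genuine cube representation of $G$, that $p+q\le 2\,cub(G)$, and that the construction runs in the stated time.

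For validity, I would use that each $I_i$ and each $J_i$ has already been shown to be a unit interval supergraph of $G$, so it only remains to certify every non-edge. Taking $(u,v)\notin E(G)$, I split into cases by location. If $u,v\in S$ then $(u,v)\notin E(G')$, so some $I'_i$ separates them, and since $I_i$ restricted to $S$ coincides with $I'_i$, so does $I_i$. If $u\in S$ and $v\in N_A$ then $u\notin A=N_G(v_A)$, hence $(u,v_A)\notin E(G')$; the $I'_i$ that separates $u$ from $v_A$ still works in $I_i$ because $v$ inherits the interval of $v_A$. If $u\in N_A$ and $v\in N_{A'}$ with $A\ne A'$ then $v_A\ne v_{A'}$ are non-adjacent in $G'$ (both lie in the independent set $V\setminus S$), so again some $I_i$ separates them, hence $u$ from $v$. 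The remaining case, $u,v\in N_A$ with the same $A$, is precisely what the $J_i$'s are for: every $I_i$ gives $u$ and $v$ the single interval of $v_A$, but $n_A(u)\ne n_A(v)$, so the $q$-bit codes disagree in some coordinate $i$, and there $J_i$ puts one of $u,v$ in $[0,1]$ and the other in $[2,3]$. This exhausts all non-edges.

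For the dimension bound I would argue $p\le cub(G)$ and $q\le cub(G)$ separately. The first is immediate: $p=|\mathcal{C}'|=cub(G')$ and $G'=G[V'\cup S]$ is an induced subgraph of $G$, and cubicity does not increase under taking induced subgraphs. For the second, let $A$ achieve $t=|N_A|$; since $G$ is connected, $A\ne\emptyset$, and picking any $a\in A$, the set $N_A\cup\{a\}$ induces a star $K_{1,t}$, so $cub(G)\ge cub(K_{1,t})$. I would then show $cub(K_{1,t})\ge\lceil\log t\rceil$ directly: in any cube representation of dimension $d$, the unit interval of $a$ in coordinate $\ell$ is some $[c_\ell,c_\ell+1]$, every $v\in N_A$ has its interval meeting it and hence its left endpoint in $[c_\ell-1,c_\ell+1]$; setting $\beta_\ell(v)=0$ if that endpoint is at most $c_\ell$ and $1$ otherwise yields a map $\beta\colon N_A\to\{0,1\}^d$ that must be injective, because $\beta(v)=\beta(v')$ would force the two left endpoints into a common window of length at most $1$ in every coordinate, hence the two intervals to meet in every coordinate, contradicting $(v,v')\notin E(G)$. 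Thus $2^d\ge t$, i.e.\ $d\ge\lceil\log t\rceil$, so $q\le cub(G)$ and $p+q\le 2\,cub(G)$.

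For the running time, $S$ is obtained in $2^{O(k)}n^{O(1)}$ time, the classes $N_A$ and their representatives in polynomial time, giving $|V(G')|\le 2^k-1+k$ and (as already noted) $cub(G')\le 2k-1$, so Lemma~\ref{count} produces $\mathcal{C}'$ in $2^{O(|V(G')|\,cub(G')\log|V(G')|)}=2^{O(2^k k^2)}$ time; forming the $I_i$'s is endpoint copying and the $J_i$'s cost $O(n\log n)$, so the total is $2^{O(2^k k^2)}n^{O(1)}$. I expect the only non-routine ingredient to be the lower bound $q\le cub(G)$, that is, recognizing the star/false-twin obstruction inside $G$ and extracting $\lceil\log t\rceil$ dimensions from it; everything else is bookkeeping about the two families of unit interval graphs.
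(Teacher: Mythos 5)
Your proof is correct and follows essentially the same route as the paper: the same case analysis on non-edges (reducing everything either to a separating $I'_i$ in $\mathcal{C}'$ via the representatives $v_A$, or to a bit position where the $J_i$'s separate two vertices of the same class $N_A$), the same bound $p=cub(G')\le cub(G)$ via induced subgraphs, and the same running-time accounting through Lemma~\ref{count} applied to $G'$ with $|V(G')|\le 2^k-1+k$ and $cub(G')\le 2k-1$. The one point of divergence is the inequality $q\le cub(G)$: the paper simply cites from \cite{Adiga10} the bound $cub(G)\ge\lceil\log\psi(G)\rceil$, where $\psi(G)$ is the number of leaves of a largest induced star, whereas you re-derive the needed special case $cub(K_{1,t})\ge\lceil\log t\rceil$ from scratch via the injectivity of the sign vector $\beta\colon N_A\to\{0,1\}^d$. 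That pigeonhole argument is correct (two unit intervals whose left endpoints lie in a common window of length at most $1$ must intersect, and $N_A$ is independent), and it makes the proof self-contained at no extra cost; otherwise the two arguments are the same.
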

\begin{proof}
 We already proved that $\mathcal{C}$ is constructible in $2^{O(2^k k^2)} n^{O(1)}$ time and each of the interval graphs in $\mathcal{C}$ is a unit interval super graph of $G$. Now consider any $(u, v) \notin E(G)$. We have the following three cases to analyze.
\begin{itemize}
 \item[(1)] If $u,v \in (V' \cup S)$ with $u \ne v$, $\exists I'_i \in \mathcal{C'}$ such that $(u, v) \notin E(I'_i)$. This implies that, $(u, v) \notin E(I_i)$, because for any $u \in (V' \cup S)$, $\left[ l_v(I_{i}),r_v(I_{i})\right] = \left[ l_v(I'_{i}),r_v(I'_{i})\right]$ by definition.
 \item[(2)] If $u \in (V' \cup S)$ and $v \in V \setminus (V' \cup S)$, then, we know that $ \exists A \subseteq S : v \in N_A$ and $N_G(v) = N_G(v_A) = A$. Therefore, $(u, v_A) \notin E(G)$. Since $\forall 1 \le i \le p$, $\left[ l_v(I_{i}),r_v(I_{i})\right] = \left[ l_{v_A}(I_{i}),r_{v_A}(I_{i})\right]$ with $v_A \in V'$, this case reduces to case 1.
  \item[(3)] If $u, v \in V \setminus (V' \cup S)$ with $u \ne v$, then, $\exists A \subseteq S : u \in N_A$ and $ \exists A' \subseteq S : v \in N_{A'}$. We also know that $\forall 1 \le i \le p$, $\left[ l_u(I_{i}),r_u(I_{i})\right] = \left[ l_{v_A}(I_{i}),r_{v_A}(I_{i})\right]$ and $\left[ l_v(I_{i}),r_v(I_{i})\right] = \left[ l_{v_{A'}}(I_{i}),r_{v_{A'}}(I_{i})\right]$. There are two sub cases to consider.
  \begin{itemize}
   \item[(3a)] If $A \ne A'$, then $(v_A, v_{A'}) \notin E(G)$, with $v_A, v_{A'} \in V'$ and  and hence this case reduces to case 1.
   \item[(3b)] If $A = A'$, then $u, v \in N_A$. We know that the $q$ bit binary representations of $n_A(u)$ and $n_A(v)$ differ at least in one bit position. Therefore, $\exists i \in \{1, 2, \cdots, q\}$ such that $b_i(v) \ne b_i(u)$. Without loss of generality, assume that $b_i(v)=0$ and $b_i(u)=1$. By definition of interval graph $J_i$, $\left[ l_v(J_{i}),r_v(J_{i})\right] = \left[ 0, 1\right]$ and  $\left[ l_u(J_{i}),r_u(J_{i})\right] = \left[ 2, 3\right]$. Thus, $(u,v) \notin E(J_i)$.  
  \end{itemize}
\end{itemize}
It is known \cite{Adiga10} that $cub(G) \ge \lceil \log \psi(G) \rceil$, where $\psi(G)$ is the number of leaf nodes in the largest induced star in $G$. Since $t = \displaystyle \max_{A \subseteq S} |N_A| \le \psi(G)$, we have $q= \lceil \log t \rceil$ $\le \lceil \log \psi(G) \rceil$ $\le cub(G)$. Since $G'$ is an induced subgraph of $G$, we have $ p= cub(G') \le cub(G)$. Thus, $\mathcal{C}$ is a valid cube representation of $G$ of dimension $p+q \le 2$ $cub(G)$, constructible in $2^{O(2^k k^2)} n^{O(1)}$ time.
\qed 
\end{proof}
We can also achieve a $(1+\epsilon)$ approximation factor, for any $\epsilon >0$ by allowing a larger running time as explained below. Define $f(k_\epsilon)=k \left(1+2^{\frac{2k-1}{\epsilon}}\right)$, where $k =MVC(G)$. If $|V(G)|=n \le f(k_\epsilon)$, then, by Lemma \ref{count}, we can get an optimal cube representation of $G$ in time $2^{O(f^2(k_\epsilon) \log f(k_\epsilon))}$. Otherwise, we have $\frac{2k-1} {\left \lceil \log{\left \lceil \frac{n-k}{k} \right \rceil} \right \rceil} \le \epsilon$. In this case, we use the construction described in Section \ref{fpt3}, to get a cube representation of $G$ of dimension $p+q$. We prove that in this case, $p+q \le cub(G) (1+\epsilon)$. 

By pigeon hole principle, $\displaystyle \max_{v \in S} |N_G(v) \cap (V\setminus S) | \ge \left \lceil \frac{n-k}{k} \right \rceil$. Therefore, $cub(G) \ge \lceil \log \psi(G) \rceil \ge \left \lceil \log{\left \lceil \frac{n-k}{k} \right \rceil} \right \rceil$. Recall that $p \le 2k -1$. Therefore, $p+q \le 2k -1 + cub(G)$ $\le cub(G) \left(\frac{2k-1}{cub(G)} + 1\right)$ $\le cub(G) \left(\frac{2k-1} {\left \lceil \log{\left \lceil \frac{n-k}{k} \right \rceil} \right \rceil} +1 \right) \le cub(G) (1+\epsilon)$.\\
The total running time of this algorithm is $2^{O\left(\frac{k^3 2^{\frac{4k}{\epsilon}}}{\epsilon}\right)} n^{O(1)}$.
 
\newpage
\appendix
\section{Appendix}\label{appendix1}
\textbf{Lemma \ref{lmeasy} :} Let $G(V,E)$ be a graph. Let $S\subseteq V$ be such that $\forall v \in V \setminus S$ and $u \in V$, $(u,v) \in E$. If a box representation $\mathcal{B}_S$ of $G[S]$ is known, then, in $O(n^2)$ time we can construct a box representation $\mathcal{B}$ of $G$ of dimension $|\mathcal{B}_S|$. In particular, $box(G) = box(G[S])$.
 \begin{proof}                                                                                                                                                                                                                                                                                                                                                                                                                                                                                                                                                                                                                                                                                                                                                                             
 Let $\mathcal{B}_S = \{I_1$, $I_2$, $\cdots$, $I_p\}$ be a box representation of $G[S]$. For $1\le i \le p$, let $l_i=\displaystyle\min_{u \in S}$ ${l_u(I)}$ and $r_i=\displaystyle\max_{u \in S}$ ${r_u(I)}$. For $1 \le i \le p$ define $I'_i$ by the interval assignment 
$$
\left[ l_v(I'_{i}),r_v(I'_{i})\right] = 
\begin{cases}  \left[ l_v(I_{i}),r_v(I_{i})\right] & \text{if $v\in S$,}
\\
[l_i, r_i] &\text{if $v \in V\setminus S$.}
\end{cases}
$$
It is easy to see that $\mathcal{B}_2 = \{I'_1$, $I'_2$, $\cdots$, $I'_p\}$ is a box representation of $G$ and $box(G) \le box(G[S])$. Since $G[S]$ is an induced subgraph of $G$, we also have $box(G) \ge box(G[S])$. The whole construction can be done in $O(n^2)$ time.
\qed
\end{proof}
\textbf{ Lemma \ref{lm4} :} Let $G(V, E)$ be a graph and let $A \subseteq V$. Let $G_1(V, E_1)$ be a super graph of $G$ with $E_1 = E \cup \{(x,y)|x,y \in A\}$. If a box representation $\mathcal{B}$ of $G$ is known, then in $O(n^2)$ time we can construct a box representation $\mathcal{B}$$_1$of $G_1$ of dimension $2 \cdot |\mathcal{B}|$. In particular, $box(G_1) \le 2.box(G)$.
\begin{proof}
Let $\mathcal{B}$ $=\{I_1$, $I_2$, $\cdots$, $I_b\}$ be a box representation of $G$. For each $1 \le i \le b$, let $l_i = \displaystyle\min_{u\in V}$ $l_u(I_i)$ and $r_i = \displaystyle\max_{u\in V}$ $r_u(I_i)$. For $1\le i \le b$, let $I_{i_1}$ be the interval graph obtained from $I_i$ by assigning the intervals 
$$
\left[ l_v(I_{i_1}),r_v(I_{i_1})\right] = 
\begin{cases}  \left[l_i, r_{v}(I_{i})\right] & \text{if $v\in A$,}
\\
\left[ l_v(I_{i}), r_v(I_{i})\right] &\text{if $v \in V\setminus A$.}
\end{cases}
$$
 and let $I_{i_2}$ be the interval graph obtained from $I_i$ by assigning the intervals
$$
\left[ l_v(I_{i_2}),r_v(I_{i_2})\right] = 
\begin{cases}  \left[ l_{v}(I_{i}),r_i\right] & \text{if $v\in A$,}
\\
\left[ l_v(I_{i}),r_v(I_{i})\right] &\text{if $v \in V\setminus A$.}
\end{cases}
$$
Note that, in constructing $I_{i_1}$ and  $I_{i_2}$ we have only extended some of the intervals of $I_i$ and therefore, $I_{i_1}$ and  $I_{i_2}$ are super graphs of $I$ and in turn of $G$. By construction, $A$ induces cliques in both  $I_{i_1}$ and  $I_{i_2}$, and thus they are supergraphs of $G_1$ too. 

 Now, consider $(u,v) \notin E$ with $u \in V \setminus A$, $v \in A$. Then either $r_{v}(I_i) < l_u(I_i)$ or $r_u(I_i) < l_{v}(I_i)$. If $r_{v}(I_i) < l_u(I_i)$, then clearly the intervals $[l_i, r_{v}(I_i)]$ and $[l_u(I_i), r_u(I_i)]$ do not intersect and thus $(u,v) \notin E(I_{i_1})$. Similarly, if $r_u(I_i) < l_{v}(I_i)$, then $(u,v) \notin E(I_{i_2})$. If both $u, v \in V \setminus A$ and $(u,v) \notin E$, then $\exists i$ such that  $(u,v) \notin E(I_i)$ for some $1\le i\le b$ and clearly by construction,  $(u,v) \notin E(I_{i_1})$ and  $(u,v) \notin E(I_{i_2})$.

  It follows that $G_1=\displaystyle\bigcap_{1 \le i \le b}{I_{i_1} \cap I_{i_2}}$ and therefore, $box(G_1) \le 2 \cdot box(G)$. 
\qed
\end{proof}
\textbf{Lemma \ref{count} :} Let $G(V, E)$ be a graph on $n$ vertices of boxicity (cubicity) $b$. Then an optimum box (cube) representation of $G$ can be computed in $2^{O(nb \log n)}$ time.
\begin{proof} 
In any interval graph on $n$ vertices, the set of end points (both left and right) of intervals corresponding to vertices of $V$ generate an ordering of $2n$ end points. Since this ordering can be done only in at most $2n!= 2^{O(n \log n)}$ ways, we can construct all possible interval graphs on $n$ vertices in $2^{O(n \log n)}$ time. For each of them, in $O(n^2)$ time we can verify that they are interval super graphs of $G$. [In linear time, it is also possible to check whether a given graph is a unit interval graph and if so, generate a unit interval representation of it.] 

All possible box (cube) representations of $G$ of dimension $d$ can be generated in $2^{O(nd \log n)}$ time by choosing any $d$ of the $2^{O(n \log n)}$ (unit) interval super graphs of $G$ at a time. Each box (cube) representation can be validated in $O(dn^2)$ time. We can repeat this for $1 \le d \le b$ in that order to get an optimal box (cube) representation of $G$, where $b$ is the boxicity (cubicity) of $G$. This can be done in $2^{O(nb \log n)}$ time.
\qed
\end{proof} 
\end{document}